\newtheorem{theorem}{Theorem}
\newtheorem{lemma}[theorem]{Lemma}
\long\def\symbolfootnote[#1]#2{\begingroup
\def\thefootnote{\fnsymbol{footnote}}\footnote[#1]{#2}\endgroup}
\renewcommand{\paragraph}[1]{{\bf #1}}
\title{Error Correction Decoding Algorithms of RS Codes Based on An Earlier Termination Algorithm to Find The Error Locator Polynomial}
\author{Zhengyi Jiang, Hao Shi, Zhongyi Huang, Linqi Song, Bo Bai, Gong Zhang, Hanxu Hou\\
}
\begin{document}
\let\emph\textit
\maketitle
\pagestyle{empty}  
\thispagestyle{empty} 
\symbolfootnote[0]{This paper was presented in part at the IEEE International Symposium on Information Theory (ISIT), 2024 \cite{2024ifdma}.
Z. Jiang, H. Shi and Z. Huang are with 
the Department of Mathematics Sciences, Tsinghua University, Beijing, China~(E-mail: jzy21@mails.tsinghua.edu.cn, shih22@mails.tsinghua.edu.cn, zhongyih@tsinghua.edu.cn). L. Song is with the Department of Computer Science, City University of Hong Kong (E-mail:linqi.song@cityu.edu.hk). B. Bai and G. Zhang are with the Theory Lab, Central Research Institute, 2012 Labs, Huawei Tech. Co. Ltd., Hong Kong SAR~(E-mail: baibo8@huawei.com, nicholas.zhang@huawei.com).
H. Hou is with the School of Electrical Engineering \& Intelligentization, Dongguan University of Technology~(E-mail: houhanxu@163.com). {\em (Corresponding author: Hanxu Hou.)}

This work was partially supported by the National Key R\&D Program of
China (No. 2020YFA0712300), the National Natural Science Foundation of China (No. 62071121, 62371411, 12025104),
Basic Research Enhancement Program of China under Grant 2021-JCJQ-JJ-0483.}
\begin{abstract}
Reed-Solomon (RS) codes are widely used to correct errors in storage systems. Finding the error locator polynomial is one of the key steps in the error correction procedure of RS codes.
Modular Approach (MA) is an effective algorithm for solving the Welch-Berlekamp (WB) key-equation problem to find the error locator polynomial that needs $2t$ steps, where $t$ is the error correction capability. In this paper, we first present a new MA algorithm that only requires $2e$ steps and then propose two fast decoding algorithms for RS codes based on our MA algorithm, where $e$ is the number of errors and $e\leq t$.
We propose Improved-Frequency Domain Modular Approach (I-FDMA) algorithm that needs $2e$ steps to solve the error locator polynomial and present our first decoding algorithm based on the I-FDMA algorithm. 
We show that, compared with the existing methods based on MA algorithms, our I-FDMA algorithm can effectively reduce the decoding complexity of RS codes when $e<t$.
Furthermore, we propose the $t_0$-Shortened I-FDMA ($t_0$-SI-FDMA) algorithm ($t_0$ is a predetermined even number less than $2t-1$) based on the new termination mechanism to solve the error number $e$ quickly. We propose our second decoding algorithm based on the SI-FDMA algorithm for RS codes and show that the multiplication complexity of our second decoding algorithm is lower than our first decoding algorithm (the I-FDMA decoding algorithm) when $2e<t_0+1$.

\end{abstract}


\section{Introduction}
\label{sec:1}
Reed-Solomon (RS) codes \cite{reed1960} are widely used in data storage systems because of their powerful error correction capability. An $(n,k)$ RS code encodes $k$ data symbols into $n$ coded symbols, and can correct any $e$ errors, where $e\leq t$ and $t=\lfloor \frac{n-k}{2} \rfloor$ is the error correction capability. Many decoding algorithms have been studied in \cite{1984ber,1995MA,bm1981,2003bm,2007bm,2012fbm,2016FFT,2020tang,2021f,2023eFDMA} to correct the errors for RS codes. 

Finding the error locator polynomial is the key part of the error correction process of RS codes.
The Berlekamp–Massey algorithm \cite{1984ber} and the Modular Approach (MA) \cite{1995MA} are two typical algorithms for solving error locator polynomial when decoding RS codes. The existing literature \cite{2007bm} has proved that the Berlekamp–Massey algorithm can stop the iteration and find the error locator polynomial in at least $t+e$ steps.

The key equation derived in the process of RS code error correction can be regarded as a special case of Welch-Berlekamp (WB) key equation, and the MA algorithm is effective for solving the WB key equation problem.
Based on MA method and LCH-FFT fast algorithm \cite{2016FFT}, the Frequency Domain Modular Approach (FDMA) algorithm is proposed in \cite{2022MA} to solve the error locator polynomial, which is suitable for decoding short RS codes and hardware implementation. 
The FDMA algorithm \cite{2022MA} needs to iterate $2t$ steps to get the error locator polynomial.
Recently, Chen {\em et al.} \cite{2023eFDMA} proposed an early termination mechanism for MA algorithm which can find the error locator polynomial by iterating only $t+e$ steps.
Similar to the FDMA algorithm, they \cite{2023eFDMA} proposed the enhanced FDMA (eFDMA) algorithm based on their termination mechanism and LCH-FFT algorithm.

The main contributions of this paper are summarized as follows.
\begin{itemize}
\item We first theoretically show that the MA algorithm can find the error locator polynomial by iterating only $2e$ steps when $e\leq t$ and propose a new 
termination mechanism to stop the algorithm at step $2e$.

\item We propose Improved-FDMA (I-FDMA) algorithm based on our new termination mechanism and present our first decoding algorithm based on I-FDMA algorithm. We show that our I-FDMA decoding algorithm has 5.64\% to 41.79\% multiplication reduction, compared with the existing eFDMA algorithm for evaluated parameters $(n,k)=(256,224)$ and $e\in\{1,2,\cdots,10\}$. 

\item We propose the $t_0$-Shortened I-FDMA (SI-FDMA) algorithm, which can be used to quickly find the number of errors $e$. 
Based on the $t_0$-SI-FDMA algorithm, we propose our second decoding algorithm, which has lower decoding complexity than our first decoding algorithm when  $2e<t_0+1$.
We show that our second decoding algorithm has 22.49\% to 46.41\% multiplication reduction for evaluated parameters $(n,k)=(256,224)$, $e\in\{1,2,\cdots,8\}$ and 26.71\% to 59.22\% multiplication reduction for evaluated parameters $(n,k)=(128,96)$, $e\in\{1,2,\cdots,8\}$, than our first decoding algorithm.
\end{itemize}

The rest of the paper is organized as follows. Section \ref{sec:2}
reviews the WB key equation problem and MA algorithm. Section \ref{sec:3} derives
the new termination mechanism for MA algorithm. Section \ref{sec:4} presents 
I-FDMA algorithm and our first decoding algorithm. 
Section~\ref{sec:five} presents the $t_0$-SI-FDMA algorithm and our second decoding algorithm.
Section~\ref{sec:com} evaluates our two decoding algorithms and the existing decoding algorithms based on MA methods. 
Section \ref{sec:6} concludes the paper.

\section{Preliminary}
\label{sec:2}
In this section, we first review the WB key equation problem \cite{2022MA}, then we introduce the classic MA algorithm and some related results.

Consider the binary finite fields $\mathbb{F}_{2^m}$ with $2^m$ elements. Let $\{v_i\}_{i=0}^{m-1}$ be the basis of $\mathbb{F}_{2^m}$ over $\mathbb{F}_{2}$, and $\mathbb{F}_{2^m}=\{\omega_i\}_{i=0}^{2^m-1}$, in which
\begin{align}
    \omega_i=i_0\cdot v_0+i_1\cdot v_1+\cdots+i_{m-1}\cdot v_{m-1}\nonumber
\end{align}
for each $i=i_0+i_1\cdot2+\cdots+i_{m-1}\cdot2^{m-1}\in\{0,1,\cdots,2^m-1\}$, where $(i_0,i_1,\ldots,i_{m-1})$ is the binary representation of $i$.

\subsection{The
WB Key Equation Problem and MA Algorithm}
\label{sec:2.1}
The Welch–Berlekamp key equation problem can be described as follows. Given that positive integer $\rho\geq 1$, find a polynomial pair $(\omega(x),n(x))$ with $\omega(x),n(x)\in \mathbb{F}_{2^m}[x]$ and $\deg(n(x))<\deg(\omega(x))$ such that the following equations hold,
\begin{align}
    n(x_i)=\omega(x_i)y_i, \forall i=1,2,\cdots,\rho,
    \label{eq.01}
\end{align}
where $\{(x_i,y_i)\}_{i=1}^{\rho}$ are given $\rho$ nonzero points over $\mathbb{F}_{2^m}$.

Consider an $(n,k)$ RS code over the field $\mathbb{F}_{2^m}$, we have that the error correction capability $t=\lfloor \frac{n-k}{2} \rfloor$ and $n\leq 2^m$.
In the error correction process of an $(n,k)$ RS code, find the error locator polynomial \cite{2022MA} is equivalent to find a polynomial pair $(\lambda(x),z(x))$ with $\deg(z(x))<\deg(\lambda(x))$ such that the following equation holds,
\begin{align}
    z(x)=s(x)\lambda(x)\mod \prod_{i=0}^{2t-1}(x-\omega_i),
    \label{eq.02}
\end{align}
where $s(x)\in \mathbb{F}_{2^m}[x]$ is the syndrome polynomial, 
$\lambda(x)\in \mathbb{F}_{2^m}[x]$ is the error locator polynomial and $z(x)\in \mathbb{F}_{2^m}[x]$ is
the error evaluation polynomial. The equation problem in Eq. \eqref{eq.02} is called the key equation problem. We can see that the key equation problem in Eq. \eqref{eq.02} is a special WB key equation problem in Eq. \eqref{eq.01} with $\rho=2t$ and $x_i=\omega_{i-1},y_i=s(\omega_{i-1})$ for $i=1,2,\cdots,2t$.

Suppose that there are $e$ errors in an $(n,k)$ RS code, where $e\leq t$. The element $\omega_i\in\mathbb{F}_{2^m}$ is the error position if and only if $\omega_i$ is the root of the error locator polynomial $\lambda(x)$, i.e., $\lambda(\omega_i)=0$. Let 
\begin{align}
    E:=\{i|\lambda(\omega_i)=0,i\in\{0,1,\cdots,n-1\}\}\nonumber
\end{align} be the index set of all $e$ error positions. Similar to the assumption in \cite{2023eFDMA}, suppose  that $E\subset\{i\}_{i=2t}^{n-1}$. In fact, we can always obtain an $(n,k)$ RS code as a shortened code of $(2^m,k+2^m-n)$ RS code by deleting the first $2^m-n$ codeword symbols.
If the number of shortened symbols $2^m-n$ is not less than $2t$, then the indices of both data symbols and parity symbols of $(n,k)$ RS code are $\geq2t$, and we can always have $E\subset\{i\}_{i=2t}^{n-1}$.

We can employ the MA method to solve the key equation problem in Eq. \eqref{eq.02}. It inputs $2t$ values $\{s(\omega_i)\}_{i=0}^{2t-1}$ and outputs polynomial pair $(\lambda(x),z(x))$ satisfying Eq. \eqref{eq.02}. We review the MA algorithm \cite{2022MA} in Algorithm \ref{alg.01}, which requires $2t$ iterations.
Please refer to the literature \cite{2022MA} for the idea behind the MA algorithm.

\begin{algorithm}[htb]
    \caption{The MA Algorithm \cite{2022MA}}
    \label{alg.01}
    \begin{algorithmic}[1]
        \REQUIRE $\{s(\omega_i)\}_{i=0}^{2t-1}$
        \ENSURE $(\lambda(x),z(x))$ satisfying $z(\omega_i)=s(\omega_i)\lambda(\omega_i)$ for $i=0,1,\cdots,2t-1$ and $\deg(\lambda(x))\leq t, \deg(z(x))<t$.
        \STATE Initialization:
        \STATE $\begin{pmatrix}
 w^{(0)}(x) & n^{(0)}(x)\\
 v^{(0)}(x) & m^{(0)}(x)
\end{pmatrix}=\begin{pmatrix}
 1 & 0\\
 0 & 1
\end{pmatrix}; \begin{pmatrix}
d^{(0)}_i \\
g^{(0)}_i
\end{pmatrix}=\begin{pmatrix}
-s(\omega_i) \\
1
\end{pmatrix},i=0,1,\cdots,2t-1; \begin{pmatrix}
R^{(0)}_0 \\
R^{(0)}_1
\end{pmatrix}=\begin{pmatrix}
0 \\
1
\end{pmatrix}$
    \FOR{$r=0,1,\cdots,2t-1$} 
    \IF{$d^{(r)}_r=0$ or ($(R^{(r)}_0>R^{(r)}_1)$ and $g^{(r)}_r\neq0$)}
        \STATE Let $\Psi_r=\begin{pmatrix}
 -g_r^{(r)} & d_r^{(r)}\\
 0 & x-\omega_{r}
\end{pmatrix}$
        \STATE and $\Psi_r(\omega_i)=\begin{pmatrix}
 -g_r^{(r)} & d_r^{(r)}\\
 0 & \omega_{i}-\omega_{r}
\end{pmatrix}$ for $i=r+1,r+2,\cdots,2t-1$
\STATE $R^{(r+1)}_0=R^{(r)}_0,R^{(r+1)}_1=R^{(r)}_1+2$
    \ELSE
    \STATE Let $\Psi_r=\begin{pmatrix}
 -g_r^{(r)} & d_r^{(r)}\\
 x-\omega_{r} & 0
\end{pmatrix}$
        \STATE and $\Psi_r(\omega_i)=\begin{pmatrix}
 -g_r^{(r)} & d_r^{(r)}\\
 \omega_{i}-\omega_{r} & 0
\end{pmatrix}$ for $i=r+1,r+2,\cdots,2t-1$
\STATE $R^{(r+1)}_0=R^{(r)}_1,R^{(r+1)}_1=R^{(r)}_0+2$
    \ENDIF
\FOR{$i=r+1,r+2,\cdots,2t-1$}
\STATE $\begin{pmatrix}
d^{(r+1)}_i \\
g^{(r+1)}_i
\end{pmatrix}=\Psi_r(\omega_i)\cdot\begin{pmatrix}
d^{(r)}_i \\
g^{(r)}_i
\end{pmatrix}$
\ENDFOR
\STATE $$\begin{pmatrix}
 w^{(r+1)}(x) & n^{(r+1)}(x)\\
 v^{(r+1)}(x) & m^{(r+1)}(x)
\end{pmatrix}=\Psi_r\cdot\begin{pmatrix}
 w^{(r)}(x) & n^{(r)}(x)\\
 v^{(r)}(x) & m^{(r)}(x)
\end{pmatrix}$$
    \ENDFOR
\IF{$R^{(2t)}_0<R^{(2t)}_1$}
\RETURN $(w^{(2t)}(x),n^{(2t)}(x))$ 
\ELSE
\RETURN $(v^{(2t)}(x),m^{(2t)}(x))$ 
\ENDIF
    \end{algorithmic}
\end{algorithm}

\subsection{Some Results of MA Algorithm}
\label{sec:2.2}
In the following, we review some existing results about MA algorithm in Algorithm \ref{alg.01}, which are useful for deriving our new termination mechanism in the next section.
The four variables $g_{r}^{(r)}$, $d_{r}^{(r)}$, $R_{0}^{(r)}$ and $R_{1}^{(r)}$ are used in Algorithm \ref{alg.01} which affect the iterations, where $r=0,1,\cdots,2t$.
The two variables $g_{r}^{(r)}$ and $d_{r}^{(r)}$ are in $\mathbb{F}_{2^m}$, while $R_{0}^{(r)}$ and $R_{1}^{(r)}$ are nonnegative integers, for $r=0,1,\cdots,2t$.
Please refer to the literature \cite{2022MA} for the detailed meaning of these four parameters.

The next lemma shows that $g_{r}^{(r)}$ and $d_{r}^{(r)}$ cannot be both zero.

\begin{lemma}\label{lem.01}
\cite[Lemma 4]{2022MA} Either $g_{r}^{(r)}$ or $d_{r}^{(r)}$ is nonzero for $r=0,1,\cdots,2t-1$.
\end{lemma}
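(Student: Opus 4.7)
The plan is to prove a slightly stronger statement by induction on $r$, namely that for every $r \in \{0,1,\ldots,2t-1\}$ and every index $i \in \{r,r+1,\ldots,2t-1\}$, the pair $(d_i^{(r)}, g_i^{(r)})$ is not both zero. The lemma is then recovered by specializing $i = r$. Strengthening is needed because computing $(d_r^{(r)}, g_r^{(r)})$ from the previous iteration requires knowing that $(d_r^{(r-1)}, g_r^{(r-1)})$ is nonzero, so the hypothesis must be carried uniformly across all surviving indices.

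The base case $r=0$ is immediate from the initialization $g_i^{(0)}=1$ in Algorithm~\ref{alg.01}. For the inductive step, I would use the update rule
\[
\begin{pmatrix} d_i^{(r+1)} \\ g_i^{(r+1)} \end{pmatrix} = \Psi_r(\omega_i)\begin{pmatrix} d_i^{(r)} \\ g_i^{(r)} \end{pmatrix}
\]
and simply compute the determinant of $\Psi_r(\omega_i)$ in the two branches. In the \textbf{if} branch it equals $-g_r^{(r)}(\omega_i-\omega_r)$, and in the \textbf{else} branch it equals $-d_r^{(r)}(\omega_i-\omega_r)$. Since the $\omega_j$ are distinct and $i>r$, the factor $\omega_i-\omega_r$ is nonzero, so invertibility of $\Psi_r(\omega_i)$ reduces to showing that the pivoting scalar ($g_r^{(r)}$ or $d_r^{(r)}$) is nonzero in the branch actually taken.

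This pivot analysis is the main (though small) obstacle and is exactly where the branching condition has to be read carefully. In the \textbf{if} branch the guard is ``$d_r^{(r)}=0$ or ($R_0^{(r)}>R_1^{(r)}$ and $g_r^{(r)}\neq 0$)''; if the first disjunct holds then the inductive hypothesis (applied with $i=r$) forces $g_r^{(r)}\neq 0$, and if the second disjunct holds then $g_r^{(r)}\neq 0$ is asserted directly, so $g_r^{(r)}\neq 0$ in every situation that enters this branch. Conversely, the negation of the guard immediately yields $d_r^{(r)}\neq 0$ in the \textbf{else} branch. Hence $\Psi_r(\omega_i)$ is invertible in both cases, and applying it to the nonzero vector $(d_i^{(r)},g_i^{(r)})^T$ (guaranteed by the inductive hypothesis, since $i\geq r+1\geq r$) yields a nonzero vector $(d_i^{(r+1)},g_i^{(r+1)})^T$. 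Setting $i=r+1$ completes the induction and hence the lemma.
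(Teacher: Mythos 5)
Your proof is correct. Note that the paper does not actually prove this statement itself; it imports it as \cite[Lemma 4]{2022MA}, so there is no in-paper argument to compare against. Your strengthened induction (carrying the nonvanishing of $(d_i^{(r)},g_i^{(r)})$ over all surviving indices $i\geq r$, reducing the step to invertibility of $\Psi_r(\omega_i)$ via its determinant, and extracting the nonzero pivot from the branch guard in each case) is the natural self-contained argument, and every step checks out against Algorithm~\ref{alg.01}: the initialization gives the base case, $\omega_i-\omega_r\neq 0$ for $i>r$, the first disjunct of the guard forces $g_r^{(r)}\neq 0$ via the inductive hypothesis at $i=r$, and the negated guard forces $d_r^{(r)}\neq 0$ in the else branch.
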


Lemma \ref{lem.02} shows the relationship between $R_{0}^{(r)}$ and $R_{1}^{(r)}$ after each iteration.

\begin{lemma}\label{lem.02}
\cite[Lemma 5]{2022MA}$R_0^{(r)}+R_1^{(r)}=2r+1$ for $r=0,1,\cdots,2t$ and if $e \leq t$, $\min\{R_0^{(2t)},R_1^{(2t)}\}=2e$.
\end{lemma}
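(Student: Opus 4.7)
The plan is to split the lemma into its two assertions and handle them separately. The first assertion $R_0^{(r)}+R_1^{(r)}=2r+1$ is pure bookkeeping on the recursion for $(R_0^{(r)},R_1^{(r)})$, while the second assertion is where the combinatorial content about the number of errors $e$ enters.

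For the first assertion, I would proceed by induction on $r$. The base case is immediate from the initialization: $R_0^{(0)}+R_1^{(0)}=0+1=1=2\cdot 0+1$. For the inductive step, I would inspect the two branches of the algorithm. In the ``if'' branch (lines 4--7) the update is $R_0^{(r+1)}=R_0^{(r)}$ and $R_1^{(r+1)}=R_1^{(r)}+2$, so the sum increases by $2$. In the ``else'' branch (lines 9--11) the update is $R_0^{(r+1)}=R_1^{(r)}$ and $R_1^{(r+1)}=R_0^{(r)}+2$, so again the sum increases by $2$. In either case, if $R_0^{(r)}+R_1^{(r)}=2r+1$, then $R_0^{(r+1)}+R_1^{(r+1)}=2(r+1)+1$.

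For the second assertion, I would interpret $R_0^{(r)}$ and $R_1^{(r)}$ as the degree measures that the MA algorithm uses to track the two running candidate pairs. More precisely, from the initialization and the multiplication by $\Psi_r$ it is straightforward to verify inductively that
\begin{equation*}
R_0^{(r)}=\max\bigl(2\deg w^{(r)},\,2\deg n^{(r)}+1\bigr),\qquad R_1^{(r)}=\max\bigl(2\deg v^{(r)},\,2\deg m^{(r)}+1\bigr).
\end{equation*}
With this, at $r=2t$ the pair selected by the test $R_0^{(2t)}<R_1^{(2t)}$ is returned as the solution $(\lambda(x),z(x))$ of the key equation \eqref{eq.02}. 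Because there are exactly $e\leq t$ errors, $\deg\lambda=e$ and $\deg z<e$, so the selected pair contributes $\min\{R_0^{(2t)},R_1^{(2t)}\}=\max(2e,2(e-1)+1)=2e$.

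The main obstacle is the identification of $R_0^{(r)}, R_1^{(r)}$ as the claimed degree maxima and, more importantly, the justification that the pair with the smaller $R$ value at the last step really recovers the true error locator polynomial (so that $\deg\lambda=e$ exactly, not smaller). For that I would rely on the uniqueness of the minimal-degree solution of the WB key equation with $\rho=2t$ data points when $e\leq t$: any two solutions must be scalar multiples, and Lemma~\ref{lem.01} guarantees that the algorithm never degenerates, so the output pair must be proportional to $(\lambda,z)$ with $\deg\lambda=e$. Combined with $R_0^{(2t)}+R_1^{(2t)}=4t+1$ from the first part, this pins down $\min\{R_0^{(2t)},R_1^{(2t)}\}=2e$ and incidentally $\max\{R_0^{(2t)},R_1^{(2t)}\}=4t+1-2e$.
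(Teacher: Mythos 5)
First, a remark on the comparison itself: the paper offers no proof of this lemma at all---it is imported verbatim as Lemma~5 of \cite{2022MA}---so there is no in-paper argument to measure yours against; what follows judges your reconstruction on its own terms. Your first assertion is handled correctly and completely: the base case is $0+1=1$, and both branches of the update (lines 7 and 11 of Algorithm~\ref{alg.01}) increase $R_0^{(r)}+R_1^{(r)}$ by exactly $2$, so the induction closes.

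The second assertion is where the sketch has genuine gaps. (a) The identification $R_0^{(r)}=\max(2\deg w^{(r)},\,2\deg n^{(r)}+1)$ is not automatic from ``multiplication by $\Psi_r$'': in the line-4 branch one has $w^{(r+1)}=-g_r^{(r)}w^{(r)}+d_r^{(r)}v^{(r)}$, and keeping $R_0^{(r+1)}=R_0^{(r)}$ requires ruling out both a rank increase (from the $d_r^{(r)}v^{(r)}$ term) and a leading-coefficient cancellation; this is precisely what the guard ($d_r^{(r)}=0$, or $R_0^{(r)}>R_1^{(r)}$ together with $g_r^{(r)}\neq 0$) and Lemma~\ref{lem.01} are for, and a complete proof must invoke them case by case rather than call the step straightforward. (b) More seriously, your argument for $\min\{R_0^{(2t)},R_1^{(2t)}\}=2e$ only delivers the lower bound. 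Uniqueness of the \emph{reduced} solution of the WB problem (note that two solutions need not be scalar multiples---$( (x-\alpha)\lambda,(x-\alpha)z)$ is again a solution for $\alpha$ outside the interpolation set; they are only proportional after cancelling common factors) shows that any admissible output pair satisfies $\lambda(x)\mid w^{(2t)}(x)$, hence has rank at least $2e$. To get equality you must show the algorithm actually \emph{attains} the minimal rank, i.e., that at every step one of the two tracked pairs is a rank-minimal solution of the partial interpolation problem through the first $r$ points. That module-theoretic invariant is the substantive content of the Modular Approach in \cite{2022MA}, and ``the algorithm never degenerates'' (Lemma~\ref{lem.01}) is not a substitute for it. As written, the proposal establishes $R_0^{(2t)}+R_1^{(2t)}=4t+1$ and $\min\{R_0^{(2t)},R_1^{(2t)}\}\geq 2e$, but not the claimed equality.
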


Lemma \ref{lem.03} reveals the key criterion about the termination mechanism of the MA algorithm.

\begin{lemma}\label{lem.03}
\cite[Lemma 1]{2023eFDMA} If $e\leq t$, then there exists some $r\leq t + e$ such that $R_1^{(r)}=2t+1$.
\end{lemma}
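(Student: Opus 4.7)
My plan is to reduce the two-dimensional recursion on $(R_0^{(r)}, R_1^{(r)})$ to a single-variable one. Using Lemma~\ref{lem.02}, I set $\delta_r := R_1^{(r)} - (r+1)$, so that $R_0^{(r)} = r - \delta_r$ and the whole pair is recovered from $\delta_r$. Translating Algorithm~\ref{alg.01}, the ``if'' branch sends $\delta_r \mapsto \delta_r + 1$ and the ``else'' branch sends $\delta_r \mapsto -\delta_r$; with $\delta_0 = 0$, the target $R_1^{(r)} = 2t+1$ is equivalent to $\delta_r = 2t - r$.

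I would then let $k_1 < k_2 < \cdots < k_V$ be the steps at which the ``else'' branch is taken, and adopt sentinels $k_0 := -1$ and $k_{V+1} := 2t$. A straightforward induction on the segment index $v$ yields the closed form
\begin{align*}
\delta_r \;=\; r - 2 S_v + c_v, \qquad k_v < r \leq k_{v+1},
\end{align*}
where $S_v := \sum_{i=1}^{v}(-1)^{v-i} k_i$ and $c_v = 0$ or $-1$ according as $v$ is even or odd. Substituting into $\delta_r = 2t - r$ gives the unique candidate $r = t + S_v$ in that segment; a parity check shows this is integral only when $v$ is even, and that $r$ lies in the allowed segment iff $t$ lies in the half-open interval $I_v := (k_v - S_v,\; k_{v+1} - S_v]$. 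The telescoping identity $S_{v+2} = S_v + (k_{v+2} - k_{v+1})$ gives $k_{v+2} - S_{v+2} = k_{v+1} - S_v$, so the intervals $I_0, I_2, I_4, \ldots$ over even indices fit end-to-end without gap or overlap.

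Finally I would invoke Lemma~\ref{lem.02} at $r = 2t$. Since $\min\{R_0^{(2t)}, R_1^{(2t)}\} = 2e$, one gets $\delta_{2t} \in \{2t - 2e,\; 2e - 2t - 1\}$. A parity check on the closed form then forces either (i) $V$ even with $S_V = e$, or (ii) $V$ odd with $S_V = 2t - e$; in case (ii) the relation $S_V = k_V - S_{V-1}$ together with $k_V \leq 2t - 1$ yields $S_{V-1} \leq e - 1$. In either case the union of the $I_v$ over the relevant even indices equals $(-1,\, 2t - e]$, which contains $t$ since $e \leq t$. Hence $t \in I_v$ for some even $v$; the corresponding step $r = t + S_v$ satisfies $R_1^{(r)} = 2t+1$, and since $S_v$ is bounded by $e$ in case~(i) and by $e-1$ in case~(ii), also $r \leq t + e$.

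\textbf{Main obstacle.} The delicate part is the bookkeeping: verifying the closed form for $\delta_r$ across consecutive ``else'' segments, and the telescoping identity $k_{v+2} - S_{v+2} = k_{v+1} - S_v$ that makes the intervals $I_v$ tile $(-1, 2t-e]$ exactly. Once these are in hand, Lemma~\ref{lem.02} and the parity argument finish the proof mechanically.
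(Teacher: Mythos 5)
The paper does not actually prove this lemma: it is quoted verbatim from \cite[Lemma 1]{2023eFDMA} in the preliminaries, so there is no in-paper argument to compare against. Judged on its own, your proof is correct and self-contained given Lemma~\ref{lem.02} and the update rules in lines 7 and 11 of Algorithm~\ref{alg.01}. The reduction to the single variable $\delta_r = R_1^{(r)}-(r+1)$ (so that the ``if'' branch is $\delta\mapsto\delta+1$ and the ``else'' branch is $\delta\mapsto-\delta$) is clean, and I verified the closed form $\delta_r=r-2S_v+c_v$ by the induction $S_{v+1}=k_{v+1}-S_v$, $c_{v+1}=-c_v-1$; the parity argument correctly rules out odd segments, the telescoping identity $k_{v+2}-S_{v+2}=k_{v+1}-S_v$ does make the even-indexed intervals tile $(-1,\,2t-e]$, and the endpoint analysis at $r=2t$ via $\min\{R_0^{(2t)},R_1^{(2t)}\}=2e$ correctly splits into your cases (i) and (ii). The one step you leave implicit is the final bound $r=t+S_v\le t+e$: knowing only that the \emph{last} even index has $S$-value $e$ (resp.\ $\le e-1$) does not immediately bound $S_v$ for the particular even $v$ with $t\in I_v$; you need that $S_v$ is increasing along even indices, which follows at once from $S_{v+2}=S_v+(k_{v+2}-k_{v+1})>S_v$, so this is a gap of exposition rather than of substance. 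Your route is purely combinatorial on the $(R_0,R_1)$ recursion plus Lemma~\ref{lem.02}, which is arguably more elementary than arguments that track degrees of the polynomials $w^{(r)},v^{(r)}$; it would be worth stating the monotonicity of $S_v$ explicitly if you write this up in full.
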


\begin{theorem}\label{th.01}
\cite[Theorem 1]{2023eFDMA} If $e\leq t$, assume that $E\subset\{i\}_{i=2t}^{n-1}$, then $(R_0^{(t+e)},R_1^{(t+e)})=(2e,2t+1)$, and the roots of $w^{(t+e)}
(x)$ are all $e$ error positions.
\end{theorem}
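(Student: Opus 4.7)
My plan combines Lemmas~\ref{lem.02} and~\ref{lem.03} with a degree-tracking invariant for the $2\times 2$ polynomial matrix maintained by Algorithm~\ref{alg.01}.

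First, I would invoke Lemma~\ref{lem.03} to pick the smallest $r^{*}\leq t+e$ with $R_1^{(r^{*})}=2t+1$. Applying Lemma~\ref{lem.02} at this step instantly gives $R_0^{(r^{*})}+R_1^{(r^{*})}=2r^{*}+1$, hence $R_0^{(r^{*})}=2r^{*}-2t\leq 2e$. This already pins down the pair $(R_0^{(r^{*})},R_1^{(r^{*})})$ up to the single unknown $r^{*}$, so the task reduces to showing $r^{*}=t+e$ and then identifying $w^{(t+e)}(x)$.

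Second, I would show $r^{*}=t+e$ by a degree argument. The MA invariants from \cite{2022MA} tell us that for every $r$ the pair $(w^{(r)}(x),n^{(r)}(x))$ satisfies the interpolation condition $n^{(r)}(\omega_i)=s(\omega_i)\,w^{(r)}(\omega_i)$ for $i=0,1,\ldots,r-1$, while the integers $R_0^{(r)},R_1^{(r)}$ bookkeep the degrees of the two rows so that $\deg w^{(r)}$ is determined by $R_0^{(r)}$ (concretely, once $R_1^{(r)}=2t+1$ the row $(w^{(r)},n^{(r)})$ carries the smaller combined degree and $\deg w^{(r)}=r-t$). When $R_1^{(r^{*})}=2t+1$ the interpolation condition effectively extends to all $2t$ syndrome points, so $(w^{(r^{*})}(x),n^{(r^{*})}(x))$ is a minimal-degree solution of the key equation~\eqref{eq.02}. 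Since $E\subset\{i\}_{i=2t}^{n-1}$ with $|E|=e$, the true error locator $\lambda(x)=\prod_{i\in E}(x-\omega_i)$ has degree exactly $e$ and is the unique minimal-degree solution (up to a nonzero scalar). Matching degrees forces $r^{*}-t=e$, giving $r^{*}=t+e$ and therefore $(R_0^{(t+e)},R_1^{(t+e)})=(2e,2t+1)$.

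Third, with $r^{*}=t+e$, the polynomial $w^{(t+e)}(x)$ has degree $e$ and is a scalar multiple of $\lambda(x)$; its $e$ roots are exactly the error positions, completing the claim.

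The hard step is the second one: I need to justify that $R_1^{(r)}=2t+1$ forces $w^{(r)}(x)$ to carry the full key-equation condition and that the degree identity $\deg w^{(r)}=R_0^{(r)}/2$ holds at termination. Both facts come from a careful induction on the update rules in Algorithm~\ref{alg.01}, tracking how $\Psi_r$ acts on degrees of the matrix rows in each of the two branches; extracting the precise invariants from \cite{2022MA} and combining them with Lemma~\ref{lem.01} (to rule out degenerate transitions) is the delicate piece, while the rest of the argument is essentially arithmetic on $R_0^{(r)}+R_1^{(r)}=2r+1$.
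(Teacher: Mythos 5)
You should first note that the paper does not actually prove this statement: it is imported verbatim as \cite[Theorem 1]{2023eFDMA} and used as a black box (e.g., inside the proofs of Lemmas~\ref{lem.05} and~\ref{lem.06}), so there is no in-paper proof to compare against. Judged on its own, your attempt is a sketch whose central step is asserted rather than established. The reduction in your first step is fine: Lemma~\ref{lem.03} gives a smallest $r^{*}\leq t+e$ with $R_1^{(r^{*})}=2t+1$, and Lemma~\ref{lem.02} then forces $R_0^{(r^{*})}=2r^{*}-2t$, so everything hinges on $r^{*}=t+e$ and on identifying $w^{(r^{*})}$ with $\lambda$. But the two facts you lean on for that --- (i) that $R_1^{(r)}=2t+1$ makes $(w^{(r)},n^{(r)})$ a solution of the \emph{full} $2t$-point key equation even though the algorithm has only processed the first $r<2t$ points, and (ii) that $\deg w^{(r)}=R_0^{(r)}/2$ at that moment --- are exactly the content of the theorem, and your justification (``the interpolation condition effectively extends to all $2t$ syndrome points'') is not an argument. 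The standard route is the opposite direction: one shows the true solution $(\lambda(x),z(x))$, which solves the partial problem, decomposes over the two rows as $a(x)\cdot(w^{(r)},n^{(r)})+b(x)\cdot(v^{(r)},m^{(r)})$; the opposite parities of $R_0^{(r)}$ and $R_1^{(r)}$ prevent rank cancellation, and $R_1^{(r)}=2t+1>2e=\mathrm{rank}(\lambda,z)$ kills $b$. Even then one only gets $\lambda(x)=a(x)w^{(r^{*})}(x)$ with $\deg a=(t+e)-r^{*}\geq 0$, so a further argument is needed to rule out $\deg a>0$ (equivalently $r^{*}<t+e$); your ``minimal-degree solution'' appeal does not supply it, because minimality of $\lambda$ is among solutions of the full problem, whereas $w^{(r^{*})}$ is a priori only a partial-problem object.

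A symptom of the missing piece is that your proof never genuinely uses the hypothesis $E\subset\{i\}_{i=2t}^{n-1}$ (you invoke it only to say $\deg\lambda=e$, which is true regardless of where the errors sit). In Algorithm~\ref{alg.01} the update $w^{(r+1)}=-g_r^{(r)}w^{(r)}+d_r^{(r)}v^{(r)}$ lets $w$ inherit factors $(x-\omega_j)$ with $j<2t$ accumulated in $v$; the hypothesis that no error index lies below $2t$ is precisely what guarantees such factors cannot masquerade as error positions, and it must enter whatever induction establishes the degree/rank invariants you are deferring to \cite{2022MA}. Until that induction is actually carried out (or the corresponding invariants are quoted precisely and verified to survive both branches of line 4), the argument is circular at its core step.
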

Together with Lemma \ref{lem.03} and Theorem \ref{th.01}, we can see that Algorithm \ref{alg.01} can be executed with $t+e$ steps.


\section{The New Termination Mechanism}
\label{sec:3}
In this section, we first show that we can execute the MA algorithm by only $2e$ steps to find the error locator polynomial. Note that $e$ is unknown in the error correction procedure, and we further derive a sufficient condition for the MA algorithm to be executed with $2e$ steps.

According to Lemma \ref{lem.02}, we can see that the sum of $R_0^{(i)}$ and $R_1^{(i)}$ is an odd number for $0\leq i\leq 2t$. Therefore, either $R_0^{(i)}$ or $R_1^{(i)}$ is an even number.
Let $\text{Even}(R_0^{(i)},R_1^{(i)})$ be the element in $\{R_0^{(i)},R_1^{(i)}\}$ which is an even number,
where $0\leq i\leq2t$. We can obtain the following lemma.

\begin{lemma}\label{lem.04}
For $0\leq i<j\leq 2t$, we have that $$\text{Even}(R_0^{(i)},R_1^{(i)})\leq \text{Even}(R_0^{(j)},R_1^{(j)}).$$ \end{lemma}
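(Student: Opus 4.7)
The plan is to reduce the general case $0 \le i < j \le 2t$ to the single-step case $j = i+1$ and then iterate. Once I show $\text{Even}(R_0^{(r)},R_1^{(r)}) \le \text{Even}(R_0^{(r+1)},R_1^{(r+1)})$ for every $r$ with $0 \le r \le 2t-1$, a trivial induction on $j-i$ yields the lemma. So the real content is the one-step inequality.

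For the one-step step I would use Lemma \ref{lem.02}, which guarantees $R_0^{(r)} + R_1^{(r)} = 2r+1$ is odd, so exactly one of $R_0^{(r)},R_1^{(r)}$ is even. I would then split on which of the two update branches of Algorithm \ref{alg.01} is taken (the branch at Lines~4--7 or the branch at Lines~8--11), and inside each branch split further on whether $R_0^{(r)}$ or $R_1^{(r)}$ is the even one. This gives four cases. In each case, I would directly compute $R_0^{(r+1)}$ and $R_1^{(r+1)}$ from the algorithm, identify which of them is even, and compare with $\text{Even}(R_0^{(r)},R_1^{(r)})$. Concretely, in the branch where $(R_0^{(r+1)},R_1^{(r+1)}) = (R_0^{(r)},R_1^{(r)}+2)$, the even entry is preserved when $R_0^{(r)}$ is even and is increased by $2$ when $R_1^{(r)}$ is even; in the branch where $(R_0^{(r+1)},R_1^{(r+1)}) = (R_1^{(r)},R_0^{(r)}+2)$, the even entry is increased by $2$ when $R_0^{(r)}$ is even and is preserved when $R_1^{(r)}$ is even. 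In all four situations $\text{Even}(R_0^{(r+1)},R_1^{(r+1)}) \in \{\text{Even}(R_0^{(r)},R_1^{(r)}),\, \text{Even}(R_0^{(r)},R_1^{(r)}) + 2\}$, which establishes the desired one-step monotonicity.

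I do not expect a real obstacle here; the argument is a bookkeeping exercise once one observes that the update rules only add $2$ to one coordinate (optionally swapping first), so parity is either preserved in place or moves to the coordinate that just received the $+2$. The only thing to be careful about is correctly tracking which of $R_0^{(r+1)},R_1^{(r+1)}$ ends up even in the swap branch, which is why I set up the four-case split explicitly rather than trying a slicker one-line argument. After the four cases, the final induction on $j-i$ is immediate and gives the lemma as stated.
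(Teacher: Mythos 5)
Your proposal is correct and follows essentially the same route as the paper: a one-step analysis of how the two update branches of Algorithm~\ref{alg.01} act on the even coordinate (showing it is either preserved or increased by $2$), followed by iteration over $i<j$. The only difference is that you spell out all four branch/parity combinations, whereas the paper handles just the case $R_0^{(i)}$ even ``without loss of generality''; since the two branches are not symmetric in $R_0$ and $R_1$, your explicit four-case bookkeeping is, if anything, the more careful version of the same argument.
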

\begin{proof}
Without loss of generality, suppose that $R_0^{(i)}=\text{Even}(R_0^{(i)},R_1^{(i)})$, where $i\in\{0,1,\cdots,2t-1\}$.
If line 4 in Algorithm \ref{alg.01} is true, then line 7 will be performed, and we have $$R_0^{(i+1)}=R_0^{(i)}=\text{Even}(R_0^{(i+1)},R_1^{(i+1)}),$$ i.e., $\text{Even}(R_0^{(i+1)},R_1^{(i+1)})=\text{Even}(R_0^{(i)},R_1^{(i)})$.

On the other hand, if line 4 is false, then line 11 will be performed, and we have $$R_1^{(i+1)}=R_0^{(i)}+2=\text{Even}(R_0^{(i+1)},R_1^{(i+1)}),$$ i.e., 
$\text{Even}(R_0^{(i+1)},R_1^{(i+1)})>\text{Even}(R_0^{(i)},R_1^{(i)})$.
Therefore, the lemma is proved.
\end{proof}

The following lemma shows a sufficient condition for finding the $e$ error positions.
\begin{lemma}\label{lem.05}
If $R_0^{(i_0)}=2e$ and $R_1^{(i_0)}\geq 2e+1$, where $0\leq i_0\leq 2t$, then the roots of $w^{(i_0)}(x)$ are all $e$ error positions.
\end{lemma}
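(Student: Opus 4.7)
The plan is to show that $w^{(i_0)}(x)$ coincides with the true error locator polynomial $\lambda(x)$ up to a nonzero scalar, via a degree-counting argument on the difference polynomial
\[
F(x):=w^{(i_0)}(x)\,z(x)-n^{(i_0)}(x)\,\lambda(x).
\]
This is essentially the strategy used to establish Theorem~\ref{th.01}, but I would use only the weaker hypothesis $R_1^{(i_0)}\geq 2e+1$ (instead of $R_1^{(i_0)}=2t+1$) and verify that there is still enough degree slack to conclude.

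First I would invoke two standard invariants of Algorithm~\ref{alg.01} from \cite{2022MA}: (i) the pair $(w^{(i_0)},n^{(i_0)})$ interpolates the syndrome values at the first $i_0$ points, that is, $n^{(i_0)}(\omega_i)=s(\omega_i)\,w^{(i_0)}(\omega_i)$ for $i=0,1,\dots,i_0-1$, and (ii) the hypothesis $R_0^{(i_0)}=2e$ forces the degree bounds $\deg(w^{(i_0)})\leq e$ and $\deg(n^{(i_0)})<e$. Second, I would use Lemma~\ref{lem.02}, which gives $R_0^{(i_0)}+R_1^{(i_0)}=2i_0+1$, so the two hypotheses together imply $i_0\geq 2e$. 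Third, the genuine pair $(\lambda(x),z(x))$ satisfies $\deg(\lambda)=e$, $\deg(z)<e$, and $z(\omega_i)=s(\omega_i)\,\lambda(\omega_i)$ for every $i=0,1,\dots,2t-1$. Combining these, $F(\omega_i)=0$ for $i=0,1,\dots,i_0-1$, so $\prod_{i=0}^{i_0-1}(x-\omega_i)\mid F(x)$; on the other hand $\deg(F)\leq 2e-1<i_0$, which forces $F\equiv 0$. The identity $w^{(i_0)}\,z=n^{(i_0)}\,\lambda$ together with $\gcd(\lambda,z)=1$ (a consequence of Forney's formula, since $z(\omega_j)\neq 0$ at every error position $\omega_j$) then yields $\lambda\mid w^{(i_0)}$, and the degree bound $\deg(w^{(i_0)})\leq e=\deg(\lambda)$ combined with $w^{(i_0)}\not\equiv 0$ upgrades this to $w^{(i_0)}=c\,\lambda$ for some $c\in\mathbb{F}_{2^m}^{*}$. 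Hence the roots of $w^{(i_0)}(x)$ are exactly the $e$ error positions.

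The main obstacle is extracting the degree bounds $\deg(w^{(i_0)})\leq e$ and $\deg(n^{(i_0)})<e$ from the single equality $R_0^{(i_0)}=2e$; this is the core invariant of the MA recursion and is proved by induction on $r$ in \cite{2022MA}, so I would cite that result rather than redo the induction here. A secondary, technically easy point is confirming $w^{(i_0)}\not\equiv 0$, which follows because the $w$-row is propagated from $w^{(0)}=1$ by updates that either rescale it by the nonzero scalar $-g_r^{(r)}$ or swap it with the nonzero $v$-row, the nonvanishing of the relevant pivot being guaranteed by Lemma~\ref{lem.01}.
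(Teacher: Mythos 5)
Your proof is essentially correct, but it takes a genuinely different route from the paper. The paper argues \emph{forward}: using Lemma~\ref{lem.04} and Lemma~\ref{lem.02} it shows that $R_0^{(i)}=2e$ and $R_0^{(i)}<R_1^{(i)}$ for every $i\geq i_0$, hence $d_i^{(i)}=0$ at every subsequent step, so each update reduces to $w^{(i+1)}(x)=-g_i^{(i)}w^{(i)}(x)$ and the root set of $w$ is frozen from step $i_0$ onward; the conclusion then follows by transporting the known statement of Theorem~\ref{th.01} at step $t+e$ back to step $i_0$. You instead give a self-contained uniqueness argument on the key equation: the difference polynomial $F=w^{(i_0)}z-n^{(i_0)}\lambda$ vanishes at $\omega_0,\dots,\omega_{i_0-1}$ while $\deg F\leq 2e-1<2e\leq i_0$, forcing $F\equiv 0$, and coprimality of $(\lambda,z)$ plus the degree bound gives $w^{(i_0)}=c\lambda$. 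What the paper's route buys is that it stays entirely inside the counter bookkeeping visible in Algorithm~\ref{alg.01} and only needs Theorem~\ref{th.01} as a black box; what your route buys is independence from Theorem~\ref{th.01} (you effectively reprove a version of it), at the cost of invoking two structural invariants of the MA recursion that the paper never states explicitly --- the partial interpolation property $n^{(r)}(\omega_i)=s(\omega_i)w^{(r)}(\omega_i)$ for $i<r$ and the identification of $R_0^{(r)}$ with the rank $\max\{2\deg w^{(r)},\,2\deg n^{(r)}+1\}$ --- both of which are indeed proved in \cite{2022MA}, so the citations are legitimate.

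Two small points to tighten. First, your description of the $w$-update as ``rescale or swap'' is not what Algorithm~\ref{alg.01} does in general: in both branches the first row of $\Psi_r$ is $(-g_r^{(r)},\,d_r^{(r)})$, so $w^{(r+1)}=-g_r^{(r)}w^{(r)}+d_r^{(r)}v^{(r)}$ is a genuine linear combination. The cleaner justification of $w^{(i_0)}\not\equiv 0$ is the one you already have in hand: $R_0^{(i_0)}=2e$ is even, and the rank of the pair $(w^{(i_0)},n^{(i_0)})$ can only be even when $\deg w^{(i_0)}=e$ is attained, which rules out $w^{(i_0)}\equiv 0$ (alternatively, $\det\Psi_r=\pm g_r^{(r)}(x-\omega_r)$ or $\pm d_r^{(r)}(x-\omega_r)$ is nonzero at every step by Lemma~\ref{lem.01}, so the transfer matrix is nonsingular). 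Second, your assertion that the genuine pair satisfies $\deg z<\deg\lambda=e$ and $z(\omega_i)=s(\omega_i)\lambda(\omega_i)$ for $i=0,\dots,2t-1$ silently uses the standing assumption $E\subset\{i\}_{i=2t}^{n-1}$ from Section~\ref{sec:2}; it would be worth flagging that dependence, since the paper's own proof inherits it only through Theorem~\ref{th.01}.
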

\begin{proof}
When $R_0^{(i_0)}=2e$ and $R_1^{(i_0)}\geq 2e+1$, we have $\text{Even}(R_0^{(i_0)},R_1^{(i_0)})=2e$.
According to Lemma \ref{lem.04} and Lemma \ref{lem.02}, we have
$2e=\text{Even}(R_0^{(i_0)},R_1^{(i_0)})\leq \text{Even}(R_0^{(i)},R_1^{(i)})\leq \text{Even}(R_0^{(2t)},R_1^{(2t)})=2e$, for any $i\geq i_0$. Therefore, for any $i\geq i_0$, we can obtain that $R_0^{(i)}=R_0^{(i_0)}=2e$. According to Lemma \ref{lem.02}, we have 
$R_0^{(i)}+R_1^{(i)}=2i+1$ for $i\geq i_0$, $R_1^{(i_0)}=2i_0+1-2e$, and further obtain that
\begin{align}
    R_1^{(i)}=2i+1-2e=2(i-i_0)+R_1^{(i_0)}\geq 2e+1>R_0^{(i)}.\label{eq.03}
\end{align}
Since $R_0^{(i)}=R_0^{(i_0)}=2e$ for any $i\geq i_0$, line 7 of Algorithm~\ref{alg.01} must be performed. Therefore, the condition in line 4 of Algorithm \ref{alg.01} must be true. 
Note that for $i\geq i_0$, we have $R^{(i)}_0<R^{(i)}_1$ by Eq. \eqref{eq.03}. Then we obtain that $d_i^{(i)}=0$; otherwise, the condition in line 4 of Algorithm \ref{alg.01} is false. We can further obtain that $g_i^{(i)}\neq0$ according to Lemma \ref{lem.01}. By line 16 of Algorithm \ref{alg.01}, for $i\geq i_0$, we have 
\begin{align}
&\begin{pmatrix}
 w^{(i+1)}(x) & n^{(i+1)}(x)\\
 v^{(i+1)}(x) & m^{(i+1)}(x)
\end{pmatrix}=\nonumber\\
&\begin{pmatrix}
 -g_i^{(i)} & 0\\
 0 & x-\omega_i
\end{pmatrix}\cdot\begin{pmatrix}
 w^{(i)}(x) & n^{(i)}(x)\\
 v^{(i)}(x) & m^{(i)}(x)
\end{pmatrix}.\nonumber
\end{align}
Thus $w^{(i+1)}(x)=-g_i^{(i)}w^{(i)}(x)$, and two polynomials $w^{(i+1)}(x)$ and $w^{(i)}(x)$ have the same roots.

To sum up, $w^{(i)}(x)$ and $w^{(t+e)}(x)$ have the same roots for any $i\geq i_0$. 
According to Theorem \ref{th.01}, we have that the roots of $w^{(t+e)}
(x)$ are all $e$ error positions. Therefore,  the roots of $w^{(i_0)}(x)$ are all $e$ error positions.
\end{proof}

According to Lemma \ref{lem.05}, we need to find a value of $i_0$ such that the condition in Lemma \ref{lem.05} holds. The next lemma shows that the condition in Lemma \ref{lem.05} holds when $i_0=2e$.

\begin{lemma}\label{lem.06}
When $i_0=2e$, we have $(R_0^{(2e)},R_1^{(2e)})=(2e,2e+1)$, i.e., the condition in Lemma \ref{lem.05} holds.
\end{lemma}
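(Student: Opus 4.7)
The plan is to combine the monotonicity of Lemma \ref{lem.04} with the explicit state at step $t+e$ from Theorem \ref{th.01}. First, Theorem \ref{th.01} gives $\text{Even}(R_0^{(t+e)}, R_1^{(t+e)}) = 2e$, and Lemma \ref{lem.04} says $\text{Even}(R_0^{(r)}, R_1^{(r)})$ is non-decreasing in $r$; hence $\text{Even}(R_0^{(2e)}, R_1^{(2e)}) \leq 2e$. Combined with $R_0^{(2e)} + R_1^{(2e)} = 4e+1$ from Lemma \ref{lem.02}, this shows the even coordinate of $(R_0^{(2e)}, R_1^{(2e)})$ is at most $2e$ and the odd coordinate is at least $2e+1$, which is one side of the claim.

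The harder direction is the matching lower bound and the identification of which coordinate is the even one. My plan is to argue by contradiction: assume $(R_0^{(2e)}, R_1^{(2e)}) \neq (2e, 2e+1)$. By the previous paragraph this leaves two possibilities---either $\text{Even}(R_0^{(2e)}, R_1^{(2e)}) < 2e$ strictly, or $\text{Even}(R_0^{(2e)}, R_1^{(2e)}) = 2e$ but with $R_1^{(2e)} = 2e$ and $R_0^{(2e)} = 2e+1$. In the first possibility, $\text{Even}$ must jump to $2e$ at some later step $r^* \in (2e, t+e]$; I would identify the branch of line 4 producing this jump, pin down the resulting value of $R_1^{(r^*)}$, and then apply the forward-propagation argument from Lemma \ref{lem.05}'s proof (which forces line 7 to fire thereafter, adding $2$ to $R_1$ each step) to obtain $R_1^{(t+e)} = R_1^{(r^*)} + 2(t+e-r^*) < 2t+1$, contradicting Theorem \ref{th.01}. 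In the second possibility, $R_0^{(2e)} > R_1^{(2e)}$; I would analyze the constraints that line 4 imposes on $d_{2e}^{(2e)}$ and $g_{2e}^{(2e)}$ (using Lemma \ref{lem.01} to rule out both being zero) and propagate forward to a contradiction with Theorem \ref{th.01}.

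The main obstacle is the bookkeeping in the first possibility above. Each increment of $\text{Even}$ by $2$ is determined jointly by the parity of $R_0^{(r)}$ and the choice between line 7 and line 11, and symmetrically for $\text{Odd}$; the number of line 11 executions in the first $r$ iterations also controls whether the even coordinate currently sits at $R_0$ or at $R_1$. I expect the delicate part of the proof to involve tracking both the parity evolution and the count of line 11 executions in tandem, and combining these with Lemma \ref{lem.04} and Theorem \ref{th.01} to show that any delayed $\text{Even}$ jump is incompatible with the final state $(R_0^{(t+e)}, R_1^{(t+e)}) = (2e, 2t+1)$.
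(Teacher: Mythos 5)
Your first paragraph is fine and matches what the paper needs: monotonicity of $\text{Even}$ (Lemma \ref{lem.04}) plus $R_0^{(2e)}+R_1^{(2e)}=4e+1$ pins the even coordinate at $\leq 2e$ and the odd one at $\geq 2e+1$. The gap is in the contradiction step, and it is not merely ``delicate bookkeeping'' --- the route you propose cannot close. First, a local error: when $\text{Even}$ first jumps to $2e$ at a step $r^*>2e$, the update rules place the value $2e$ in $R_1^{(r^*)}$ while $R_0^{(r^*)}=2r^*+1-2e>2e+1$, so the hypothesis of Lemma \ref{lem.05} ($R_0=2e$, $R_1\geq 2e+1$) is \emph{not} met at $r^*$ and its forward-propagation argument does not apply there. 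In fact line 7 \emph{cannot} fire at $r^*$ (it would push $\text{Even}$ to $2e+2$, violating Lemma \ref{lem.04} against $\text{Even}(R_0^{(2t)},R_1^{(2t)})=2e$); line 11 must fire, which swaps the even value into $R_0$ at step $r^*+1$ with $R_1^{(r^*+1)}=2(r^*+1)+1-2e$, after which Lemma \ref{lem.05}'s propagation yields exactly $R_1^{(t+e)}=2(t+e)+1-2e=2t+1$. The trajectory therefore lands on precisely the state $(2e,2t+1)$ that Theorem \ref{th.01} asserts, and no contradiction with the $R$-values is available; your second possibility behaves the same way. In short, the $(R_0,R_1)$ dynamics together with the numerical part of Theorem \ref{th.01} are consistent with a delayed jump and cannot rule it out.

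The missing idea is the one the paper uses. Let $r_0$ be the first index with $R_0^{(r_0)}=2e$; both of your bad cases force $r_0>2e$, hence $R_1^{(r_0)}=2r_0+1-2e>2e+1$. The paper then observes that line 11 must have fired at step $r_0-1$ with $d_{r_0-1}^{(r_0-1)}\neq 0$ and $g_{r_0-1}^{(r_0-1)}=0$, so $w^{(r_0)}(x)=d_{r_0-1}^{(r_0-1)}v^{(r_0-1)}(x)$, and it exploits the \emph{polynomial} update at step $r_0-2$: either branch of line 16 forces $(x-\omega_{r_0-2})$ to divide $v^{(r_0-1)}(x)$. Since $r_0-2<2t$ and $E\subset\{i\}_{i=2t}^{n-1}$, $\omega_{r_0-2}$ cannot be an error position, contradicting Lemma \ref{lem.05}'s conclusion that all roots of $w^{(r_0)}(x)$ are error positions. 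You need this root-location argument (the non-numerical half of Lemma \ref{lem.05} and Theorem \ref{th.01}); tracking parities and counts of line-11 executions will not suffice.
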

\begin{proof}
Let $r_0$ be the smallest integer such that $R_0^{(r_0)}=2e$. Obviously, $r_0\leq t+e$ since $R_0^{(t+e)}=2e$ according to Theorem~\ref{th.01}.
Next, we show that $R_1^{(r_0)}=2e+1$ by contradiction.

On the one hand, suppose that $R_1^{(r_0)}<2e+1$, then $$R_0^{(r_0)}+R_1^{(r_0)}=2r_0+1<2e+(2e+1),$$
where the above equality comes from Lemma \ref{lem.02}.
We obtain that $r_0<2e$ from the above inequality. According to Lemma~\ref{lem.04}, for any $2t-1\geq i\geq r_0$, we have 
\begin{align*}
2e=\text{Even}(R_0^{(r_0)},R_1^{(r_0)})&\leq \text{Even}(R_0^{(i)},R_1^{(i)})\\
&\leq \text{Even}(R_0^{(2t)},R_1^{(2t)})=2e.
\end{align*}
Therefore, $R_0^{(i)}=R_0^{(r_0)}=2e$ and $R_1^{(i)}=2i+1-2e$ for $2t\geq i\geq r_0$. Since $r_0<2e$, we have $(R_0^{(2e)},R_1^{(2e)})=(2e,2e+1)$.

On the other hand, suppose that $R_1^{(r_0)}>2e+1$. Note that $R_0^{(r_0)}=2e$ and $R_0^{(r_0-1)}<2e$. According to line 7 and line 11, we can obtain that $R_0^{(r_0-1)}=R_1^{(r_0)}-2$; otherwise, we have $R_0^{(r_0-1)}=R_0^{(r_0)}=2e$, which contradicts with $R_0^{(r_0-1)}<2e$.
Therefore, when $r=r_0-1$ in line 3, we will perform line 11 and the condition in line 4 is false.
Since $R_1^{(r_0)}$ is an odd number, we have $R_1^{(r_0)}\geq 2e+3$ and $$R_0^{(r_0-1)}=R_1^{(r_0)}-2\geq 2e+1>2e=R_1^{(r_0-1)},$$
then we have $d_{r_0-1}^{(r_0-1)}\neq 0$ and $g_{r_0-1}^{(r_0-1)}=0$.
By line 16 with $r=r_0-1$, we have 
\begin{align}
    &\begin{pmatrix}
 w^{(r_0)}(x) & n^{(r_0)}(x)\\
 v^{(r_0)}(x) & m^{(r_0)}(x)
\end{pmatrix}=\nonumber\\
&\begin{pmatrix}
 0 & d_{r_0-1}^{(r_0-1)}\\
 x-\omega_{r_0-1} & 0
\end{pmatrix}\cdot\begin{pmatrix}
 w^{(r_0-1)}(x) & n^{(r_0-1)}(x)\\
 v^{(r_0-1)}(x) & m^{(r_0-1)}(x)
\end{pmatrix},\nonumber
\end{align}
and $w^{(r_0)}(x)=d_{r_0-1}^{(r_0-1)}v^{(r_0-1)}(x)$. Therefore, two polynomials $w^{(r_0)}(x)$ and $v^{(r_0-1)}(x)$ have the same roots.
According to Lemma \ref{lem.05}, the roots of $w^{(r_0)}(x)$ are all $e$ error positions, thus the roots of $v^{(r_0-1)}(x)$ are also all error positions.

When $r=r_0-2$, if the condition in line 4 is true, then we have $v^{(r_0-1)}(x)=(x-\omega_{r_0-2})v^{(r_0-2)}(x)$ by line 16, which contradicts with the assumption that $v^{(r_0-1)}(x)$ does not contain the root whose index less than $2t$.
Otherwise, if the condition in line 4 is false, then we have $v^{(r_0-1)}(x)=(x-\omega_{r_0-2})w^{(r_0-2)}(x)$ by line 16, which also contradicts with the assumption that the root index of $v^{(r_0-1)}(x)$ is no less than $2t$.

Therefore, we have $R_1^{(r_0)}=2e+1$.
According to Lemma \ref{lem.02}, $R_0^{(r_0)}+R_1^{(r_0)}=2r_0+1=4e+1$, then we have $r_0=2e$. The lemma is proved.
\end{proof}

Combining Lemma \ref{lem.04} to Lemma \ref{lem.06}, we can know that $(R_0^{(2e)},R_1^{(2e)})=(2e,2e+1)$ and the roots of $w^{(2e)}(x)$ are all $e$ error positions when $e\leq t$ and $E\subset\{i\}_{i=2t}^{n-1}$. Therefore, we can obtain the error locator polynomial after $2e$ iterations in Algorithm \ref{alg.01}.
However, the value of $e$ is unknown. Next, we present a sufficient condition to find the error locator polynomial by only $2e$ iterations.

\begin{lemma}\label{lem.07}
Given an integer $i_1$ with $2t-1\geq i_1\geq 0$, if $d_{j}^{(i_1)}=0$ for all $2t-1\geq j\geq i_1$,
then $w^{(i_1)}(x)$ is the error locator polynomial.
\end{lemma}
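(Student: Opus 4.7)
The plan is to show that the hypothesis ``$d_j^{(i_1)}=0$ for every $j\geq i_1$'' forces the MA algorithm to behave essentially trivially from step $i_1$ onward, so that $w^{(i_1)}(x)$ coincides, up to a nonzero scalar, with $w^{(t+e)}(x)$, which Theorem~\ref{th.01} already identifies as the error locator polynomial.

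First I would prove by induction on $r\geq i_1$ that $d_j^{(r)}=0$ for every $j\geq r$. The base case $r=i_1$ is the hypothesis. For the inductive step, $d_r^{(r)}=0$ makes the condition in line~4 of Algorithm~\ref{alg.01} true, so line~7 fires with $\Psi_r=\begin{pmatrix} -g_r^{(r)} & 0\\ 0 & x-\omega_r\end{pmatrix}$ (the top-right entry vanishes because $d_r^{(r)}=0$). Evaluating at $\omega_j$ for $j>r$ and applying line~14 gives $d_j^{(r+1)}=-g_r^{(r)}\,d_j^{(r)}=0$, preserving the zero pattern. Note that Lemma~\ref{lem.01} guarantees $g_r^{(r)}\neq 0$ here, since $d_r^{(r)}=0$.

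Next, because line~7 is executed at every $r\geq i_1$ with $d_r^{(r)}=0$, the matrix update in line~16 reduces to $w^{(r+1)}(x)=-g_r^{(r)}\,w^{(r)}(x)$. Composing across steps, $w^{(r)}(x)$ is a nonzero scalar multiple of $w^{(i_1)}(x)$ for every $r\geq i_1$, so they share the same set of roots.

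It remains to identify those roots as the error positions. If $i_1\leq t+e$, then $w^{(t+e)}(x)=c\cdot w^{(i_1)}(x)$ for some nonzero $c$, and Theorem~\ref{th.01} states that the roots of $w^{(t+e)}(x)$ are exactly the $e$ error positions. If instead $i_1>t+e$, the argument already used inside the proof of Lemma~\ref{lem.05} shows that only line~7 fires for every $r\geq t+e$, so $w^{(i_1)}(x)$ is still a scalar multiple of $w^{(t+e)}(x)$ and the same conclusion transfers. In both cases $w^{(i_1)}(x)$ has exactly the $e$ error positions as its roots, hence it is the error locator polynomial. The main obstacle is the small case split on $i_1$ versus $t+e$ required to bridge back to Theorem~\ref{th.01}; the zero-propagation itself and the collapse of line~16 are immediate from the shape of $\Psi_r$ when $d_r^{(r)}=0$.
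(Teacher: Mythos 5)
Your proposal is correct and follows essentially the same route as the paper's proof: propagate the zero pattern $d_j^{(r)}=0$ forward by induction via line~14, conclude that line~7 fires at every step $r\geq i_1$ so that $w^{(r+1)}(x)=-g_r^{(r)}w^{(r)}(x)$ with $g_r^{(r)}\neq 0$ by Lemma~\ref{lem.01}, and hence that all $w^{(r)}(x)$ for $r\geq i_1$ share the same roots. The only cosmetic difference is that you anchor the conclusion at $w^{(t+e)}(x)$ via Theorem~\ref{th.01} (with a small case split on $i_1$ versus $t+e$), whereas the paper runs the scalar chain all the way to $w^{(2t)}(x)$ and cites that as the error locator polynomial; both bridges are valid.
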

\begin{proof}
Since $d_{j}^{(i_1)}=0$ for any $2t-1\geq j\geq i_1$, according to line 14 in Algorithm \ref{alg.01} with $i=j$, we have
\begin{align}
    d_{j}^{(i_1+1)}=-g_{i_1}^{(i_1)}d_{j}^{(i_1)}+d_{i_1}^{(i_1)}g_{j}^{(i_1)}=0,\nonumber
\end{align}
for any $2t-1\geq j\geq i_1+1$.
By analogy, we can know that $d_{j}^{(i)}=0$ for any $i_1\leq i\leq 2t-1$ and $i\leq j\leq 2t-1$ by repeatedly using line 14 in Algorithm \ref{alg.01}.

Therefore, for any $r\geq i_1$, we have $d_{r}^{(r)}=0$ and the condition in line 4 of Algorithm \ref{alg.01} is true. Then we have
\begin{align}
&\begin{pmatrix}
 w^{(r+1)}(x) & n^{(r+1)}(x)\\
 v^{(r+1)}(x) & m^{(r+1)}(x)
\end{pmatrix}=\nonumber\\
&\begin{pmatrix}
 -g_r^{(r)} & 0\\
 0 & x-\omega_r
\end{pmatrix}\cdot\begin{pmatrix}
 w^{(r)}(x) & n^{(r)}(x)\\
 v^{(r)}(x) & m^{(r)}(x)
\end{pmatrix},\nonumber
\end{align}
i.e., $w^{(r+1)}(x)=-g_r^{(r)}w^{(r)}(x)$. By Lemma \ref{lem.01}, we can see that $g_r^{(r)}\neq 0$ for $r\geq i_1$. Therefore, polynomials $w^{(r+1)}(x)$ and $w^{(r)}(x)$ have the same roots for any $r\geq i_1$. We can obtain that $w^{(i_1)}(x)$ and $w^{(2t)}(x)$ have the same roots. Since $w^{(2t)}(x)$ is the error locator polynomial, then $w^{(i_1)}(x)$ is also an error locator polynomial.

\end{proof}

Next, we show that the condition in Lemma \ref{lem.07} holds for $i_1=2e$.
\begin{lemma}\label{lem.08}
For any $2t-1\geq j\geq 2e$, we have $d_{j}^{(2e)}=0$.
\end{lemma}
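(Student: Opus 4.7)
\emph{Plan.} The plan is to combine Lemma~\ref{lem.06} with a backward induction along the recurrence for $d_j^{(r)}$. First I would show that $d_r^{(r)}=0$ holds for every $r\in\{2e,2e+1,\ldots,2t-1\}$, not just at $r=2e$. Then I would use the update rule of line~14, which collapses to a scalar-multiple recurrence once $d_r^{(r)}=0$, to propagate the diagonal zero $d_j^{(j)}=0$ backwards along column $j$ to $d_j^{(2e)}=0$.

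For the first step, I would replay the parity argument from the proof of Lemma~\ref{lem.05}. By Lemma~\ref{lem.02} and Lemma~\ref{lem.04}, $\text{Even}(R_0^{(r)},R_1^{(r)})$ is nondecreasing in $r$ and equals $2e$ at $r=2t$, so it is pinned to $2e$ for every $r\in[2e,2t]$. If line~11 were executed at some $r\geq 2e$ with $R_0^{(r)}=2e$, then $R_1^{(r+1)}=R_0^{(r)}+2=2e+2$ would become the new even value, violating the pin. Hence line~7 is taken at every step $r\geq 2e$, which forces the condition on line~4 to hold, and because $R_0^{(r)}<R_1^{(r)}$ (from Lemma~\ref{lem.02}) rules out the second disjunct, the first disjunct yields $d_r^{(r)}=0$. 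Lemma~\ref{lem.01} then gives $g_r^{(r)}\neq 0$ as well.

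For the second step, with the first form of $\Psi_r$ in force and $d_r^{(r)}=0$, line~14 simplifies to
\[
d_j^{(r+1)}=-g_r^{(r)}\,d_j^{(r)}\qquad\text{for every }j>r\geq 2e,
\]
and since $g_r^{(r)}\neq 0$, this recurrence is invertible in $r$. Fix $j\in\{2e,\ldots,2t-1\}$. The case $j=2e$ is immediate from the first step. For $j>2e$, starting from $d_j^{(j)}=0$ (again from the first step, applied at $r=j$), running the recurrence backwards through $r=j-1,j-2,\ldots,2e$ preserves the value $0$ at each step and yields $d_j^{(2e)}=0$.

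The main obstacle is the first step---being careful that the parity argument from Lemma~\ref{lem.05} really does apply at every $r\geq 2e$, not just at $r=2e$. Once the diagonal vanishing $d_r^{(r)}=0$ (with $g_r^{(r)}\neq 0$) is in hand for the full range $2e\leq r\leq 2t-1$, the backward propagation through a single-term linear recurrence is mechanical.
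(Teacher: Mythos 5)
Your proposal is correct and follows essentially the same route as the paper: both arguments rest on the diagonal vanishing $d_r^{(r)}=0$ (hence $g_r^{(r)}\neq 0$ by Lemma~\ref{lem.01}) for all $2e\leq r\leq 2t-1$, obtained from Lemmas~\ref{lem.05} and~\ref{lem.06}, together with the collapsed recurrence $d_j^{(r+1)}=-g_r^{(r)}d_j^{(r)}$. The paper phrases the second step as a forward contradiction (if $d_{j_0}^{(2e)}\neq 0$ then $d_{j_0}^{(j_0)}=\prod_{j=2e}^{j_0-1}(-g_j^{(j)})\,d_{j_0}^{(2e)}\neq 0$), which is just the contrapositive of your backward propagation.
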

\begin{proof}
According to Lemma \ref{lem.05} and Lemma \ref{lem.06}, we have that $R_{0}^{(2e)}=2e$, the roots of $w^{(2e)}(x)$ are all $e$ error positions, and $d_i^{(i)}=0$ for all $2e\leq i\leq 2t-1$. Next, we show that $d_{j}^{(2e)}=0$ for any $j\geq 2e$ by contradiction.

Suppose there exists $j_0$ with $2e\leq j_0\leq 2t-1$ such that $d_{j_0}^{(2e)}\neq 0$. According to line 14 in Algorithm \ref{alg.01}, we have $$d_{j_0}^{(2e+1)}=-g_{2e}^{(2e)}d_{j_0}^{(2e)}+d_{2e}^{(2e)}g_{j_0}^{(2e)}=-g_{2e}^{(2e)}d_{j_0}^{(2e)}.$$
Similarly, we can obtain 
\begin{align*}
d_{j_0}^{(2e+2)}&=-g_{2e+1}^{(2e+1)}d_{j_0}^{(2e+1)}+d_{2e+1}^{(2e+1)}g_{j_0}^{(2e+1)}\\&=-g_{2e+1}^{(2e+1)}d_{j_0}^{(2e+1)}\\&=(-g_{2e+1}^{(2e+1)})(-g_{2e}^{(2e)})d_{j_0}^{(2e)},
\end{align*}
and further, obtain 
\begin{align}
    d_{j_0}^{(j_0)}=\prod_{j=2e}^{j_0-1}(-g_{j}^{(j)})\cdot d_{j_0}^{(2e)}.\label{eq.04}
\end{align}

By Lemma \ref{lem.01}, we have $g_{j}^{(j)}\neq 0$ for all $2e\leq j\leq j_0-1$. Since $d_{j_0}^{(2e)}\neq 0$, i.e., the right side of Eq. \eqref{eq.04} is non-zero, which contradicts with $d_{j_0}^{(j_0)}=0$. Therefore, the lemma is finished.
\end{proof}

According to Lemma \ref{lem.07} and Lemma \ref{lem.08}, we only need to check whether $d_{j}^{(r)}=0$ for all $r\leq j\leq 2t-1$ in iteration $r$. If $d_{j}^{(r)}=0$ for all $r\leq j\leq 2t-1$ in iteration $r$, then we have $r=2e$ and we can stop Algorithm \ref{alg.01} to output the error locator polynomial.

\section{Our First Decoding Algorithm Based on I-FDMA Algorithm}
\label{sec:4}
In this section, we present our first decoding algorithm for LCH-FFT-based RS codes \cite{2016FFT}.
We first review the LCH-FFT related algorithms \cite{2016FFT,2022MA}, then propose the I-FDMA algorithm based on the new termination mechanism derived in Section \ref{sec:3}, and finally give our first decoding algorithm based on the LCH-FFT related algorithms and the I-FDMA algorithm.

\subsection{LCH-FFT Related Algorithms}\label{sec:4.1}
The encoding of LCH-FFT-based RS codes \cite{2016FFT} is based on LCH-FFT algorithm under LCH-basis $\bar{X}=\{\bar{X}_0(x),\bar{X}_1(x),\cdots,\bar{X}_{2^m-1}(x)\}$ in linear space $\mathbb{F}_{2^m}[x]/(x^{2^m}-x)$, where 
$$\bar{X}_{i}(x):=\frac{\prod_{j=0}^{m-1}(s_j(x))^{i_j}}{p_i},\ p_i=\prod_{j=0}^{m-1}(s_j(v_j))^{i_j}$$
for any $i=\sum_{j=0}^{m-1}i_j\cdot2^j\in\{0,1,\cdots,2^m-1\}$, and for any $j\in\{0,1,\cdots,m\}$, the subspace polynomial $s_j(x):=\prod_{i=0}^{2^j-1}(x-\omega_j)$.

The LCH-FFT algorithm is used to calculate the following estimation problem: given a polynomial $f(x)=\sum_{i=0}^{2^k-1}f_i\bar{X}_i(x)$ in $\mathbb{F}_{2^m}[x]/(x^{2^m}-x)$ with $\deg(f(x))<2^k$, $k\leq m$, and any $\beta\in\mathbb{F}_{2^m}$, calculate the values of $f(x)$ at $2^k$ points $\{\omega_{i}+\beta\}_{i=0}^{2^k-1}$, i.e., $\{f(\omega_i+\beta)\}_{i=0}^{2^k-1}$.
Let $\bm{f}_{\bar{X}}:=(f_0,f_1,\cdots,f_{2^k-1})$ represent the coefficient vector of polynomial $f(x)$ under basis $\bar{X}$.
We give the LCH-FFT algorithm in Algorithm \ref{FFT}. 
 
The inverse process of LCH-FFT algorithm, namely LCH-IFFT algorithm, corresponds to the interpolation problem: given $2^k$ values $\{d_i=f(\omega_i+\beta)\}_{i=0}^{2^k-1}$ in $\mathbb{F}_{2^m}$, where $\beta\in\mathbb{F}_{2^m}$ and $k\leq m$, calculate the coefficient vector $\bm{f}_{\bar{X}}$ of polynomial $f(x)=\sum_{i=0}^{2^k-1}f_i\bar{X}_i(x)$ based on basis $\bar{X}$.
We give the LCH-IFFT algorithm in Algorithm \ref{IFFT}. 

Tang {\em et al.} proposed the LCH-Extended IFFT algorithm \cite{2022MA} to solve the $2^k+1$-point interpolation problem, which is based on LCH-IFFT Algorithm \ref{IFFT} and can be applied to the decoding process of RS codes. We give the LCH-Extended IFFT algorithm in Algorithm \ref{extendedIFFT}. Let $h=2^k$, the complexity of each of Algorithms \ref{FFT}, \ref{IFFT} and  \ref{extendedIFFT} is $O(h\lg(h))$ \cite{2016FFT,2022MA}.

\begin{algorithm}[htb]
	\caption{$FFT_{\bar{X}}(\bm{f}_{\bar{X}},k,\beta)$\cite{2016FFT}}
	\label{FFT}
	\begin{algorithmic}[1]
	\REQUIRE $\bm{f}_{\bar{X}}=(f_0,f_1,\cdots,f_{2^k-1})$, $0\leq k\leq m,\beta\in \mathbb{F}_{2^m}$
	\ENSURE $2^k$ values: $(f(\omega_{0}+\beta),f(\omega_{1}+\beta),\cdots,f(\omega_{2^k-1}+\beta))$
		\IF {$k=0$}
		\RETURN $f_0$
		\ENDIF
		\FOR{$i=0,1,\cdots,2^{k-1}-1$} 
		\STATE $g_i^{(0)}=f_i+f_{i+2^{k-1}}\cdot\frac{s_{k-1}(\beta)}{s_{k-1}(v_{k-1})}$
		\STATE $g_i^{(1)}=g_i^{(0)}+f_{i+2^{k-1}}$
		\ENDFOR
		\STATE $V_0=FFT_{\bar{X}}(g^{(0)},k-1,\beta)$, where $g^{(0)}=(g_0^{(0)},g_1^{(0)},\cdots,g_{2^{k-1}-1}^{(0)})$
		\STATE $V_1=FFT_{\bar{X}}(g^{(1)},k-1,\beta)$, where $g^{(1)}=(g_0^{(1)},g_1^{(1)},\cdots,g_{2^{k-1}-1}^{(1)})$
		\RETURN $(V_0,V_1)$
	\end{algorithmic}
\end{algorithm}

\begin{algorithm}[htb]
	\caption{$IFFT_{\bar{X}}(\mathbf{D}_{2^k},k,\beta)$\cite{2016FFT}}
	\label{IFFT}
	\begin{algorithmic}[1]
		\REQUIRE $\mathbf{D}_{2^k}=(d_0,d_1,\cdots,d_{2^k-1})$, where $d_i=f(\omega_i+\beta)$ and $0\leq k\leq m,\beta\in \mathbb{F}_{2^m}$
		\ENSURE $\bm{f}_{\bar{X}}=(f_0,f_1,\cdots,,f_{2^k-1})$
		\IF {$k=0$}
		\RETURN $d_0$
		\ENDIF
		\STATE $(g^{(0)}_0,g^{(0)}_1,\cdots,g^{(0)}_{2^{k-1}-1})=IFFT_{\bar{X}}(V_0,k-1,\beta)$, where $V_0=(d_0,d_1,\cdots,d_{2^{k-1}-1})$
		\STATE $(g^{(1)}_0,g^{(1)}_1,\cdots,g^{(1)}_{2^{k-1}-1})=IFFT_{\bar{X}}(V_1,k-1,\beta)$, where $V_1=(d_{2^{k-1}},d_{2^{k-1}+1},\cdots,d_{2^k-1})$
		\FOR{$i=0,1,\cdots,2^{k-1}-1$} 
		\STATE $f_{i+2^{k-1}}=g_i^{(0)}+g_i^{(1)}$
		\STATE $f_i=g_i^{(0)}+\frac{s_{k-1}(\beta)}{s_{k-1}(v_{k-1})}\cdot f_{i+2^{k-1}}$
		\ENDFOR
		\RETURN $(f_0,f_1,\cdots,,f_{2^k-1})$
	\end{algorithmic}
\end{algorithm}

\begin{algorithm}[htb]
	\caption{$Extended\ IFFT_{\bar{X}}(k,\beta)$\cite{2022MA}}
	\label{extendedIFFT}
	\begin{algorithmic}[1]
		\REQUIRE $\{f(\omega_i+\beta)\}_{i=0}^{2^k}$, $0\leq k\leq m,\beta\in \mathbb{F}_{2^m}$
		\ENSURE $\mathbf{f}_{\bar{X}}=(f_0,f_1,\cdots,,f_{2^k})$
		\STATE Call Algorithm\ref{IFFT} with input $(f(\omega_0+\beta),f(\omega_1+\beta),\dots,f(\omega_{2^k-1}+\beta)),k,\beta$ to obtain $\hat{f}(x)$
		\STATE Calculate $\hat{f}(\omega_{2^k}+\beta)$
		\STATE $f(x)=(f(\omega_{2^k}+\beta)-\hat{f}(\omega_{2^k}+\beta))\cdot\bar{X}_{2^k}(x)+\hat{f}(x)-\frac{s_{k}(\beta)}{s_k(v_k)}\cdot(f(\omega_{2^k}+\beta)-\hat{f}(\omega_{2^k}+\beta))\cdot\bar{X}_{0}(x)$
		\RETURN $f(x)$
	\end{algorithmic}
\end{algorithm}

In the following, we consider the encoding of LCH-FFT-based $(n=2^m,k=2^m-2^\mu)$ RS codes over $\mathbb{F}_{2^m}$ whose error correction capability is $t=2^{\mu-1}$, where $\mu<m$. Let $T:=n-k=2^\mu$.
Any $1\times n$ codeword vector $\mathbf{F}$ can be represented as $\mathbf{F}=(\mathbf{F}_1,\mathbf{F}_2,\cdots,\mathbf{F}_{2^{m-\mu}})$, where $\mathbf{F}_i=(f(\omega_{(i-1)2^{\mu}}),f(\omega_{(i-1)2^{\mu}+1}),\cdots,f(\omega_{i2^{\mu}-1}))$ for any $i\in\{1,2,\cdots,2^{m-\mu}\}$.
The following lemma \cite[Lemma 10]{2016FFT} is the key to the encoding of systematic code.
\begin{lemma}\cite[Lemma 10]{2016FFT}
	\begin{align}
	    &IFFT_{\bar{X}}(\mathbf{F}_1,\mu,\omega_{0})+IFFT_{\bar{X}}(\mathbf{F}_2,\mu,\omega_{2^{\mu}})+\nonumber\\
     &\cdots+IFFT_{\bar{X}}(\mathbf{F}_{2^{m-\mu}},\mu,\omega_{2^m-2^{\mu}})=\mathbf{0}_{1\times 2^{\mu}},\label{eqsys}
	\end{align}
	in which the addition between two vectors means adding the elements corresponding to each position, and $\mathbf{0}_{a\times b}$ denotes $a\times b$ matrix with each element being 0.
	\label{systematicRS}
\end{lemma}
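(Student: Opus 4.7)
The plan is to identify each IFFT output with a modular reduction of the codeword polynomial, decompose that polynomial in powers of the subspace polynomial $s_\mu$, and reduce the claimed identity to a vanishing of power sums over an $\mathbb{F}_2$-subspace.

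First, I would interpret each IFFT output polynomially. From Algorithm \ref{IFFT}, $IFFT_{\bar{X}}(\mathbf{F}_i, \mu, \omega_{(i-1)2^\mu})$ is the coefficient vector, in the basis $\bar{X}$, of the unique polynomial $g_i(x)$ of degree less than $2^\mu$ satisfying $g_i(\omega_\ell + \omega_{(i-1)2^\mu}) = f(\omega_{(i-1)2^\mu + \ell})$ for $\ell = 0, 1, \ldots, 2^\mu - 1$. Using $\omega_{(i-1)2^\mu + \ell} = \omega_\ell + \omega_{(i-1)2^\mu}$ from the $\mathbb{F}_2$-linear indexing of $\{\omega_i\}$ and the fact that $s_\mu$ is a linearized polynomial (its root set is the additive subgroup $\{\omega_0,\ldots,\omega_{2^\mu-1}\}$), the annihilator of the $i$-th block of points is $z_i(x) = s_\mu(x) + c_i$ with $c_i := s_\mu(\omega_{(i-1)2^\mu})$. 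Hence $g_i(x) = f(x) \bmod (s_\mu(x) + c_i)$, and the target identity becomes $\sum_{i=1}^{2^{m-\mu}} g_i(x) = 0$ as a polynomial of degree $<2^\mu$.

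Next, using the codeword constraint $\deg f < k = 2^m - 2^\mu$, I would expand $f(x) = \sum_{\ell=0}^{L} a_\ell(x) s_\mu(x)^\ell$ with $\deg a_\ell < 2^\mu$, where a simple degree count forces $L \leq 2^{m-\mu} - 2$. Reducing modulo $s_\mu + c_i$ (so that $s_\mu \equiv c_i$ in characteristic 2) collapses this to $g_i(x) = \sum_\ell a_\ell(x) c_i^\ell$, and therefore
\begin{align*}
\sum_{i=1}^{2^{m-\mu}} g_i(x) = \sum_{\ell=0}^{L} a_\ell(x) \sum_{i=1}^{2^{m-\mu}} c_i^\ell.
\end{align*}

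Finally, I would verify that each inner power sum vanishes. The set $\{\omega_{(i-1)2^\mu}\}_{i=1}^{2^{m-\mu}}$ equals $\mathrm{span}_{\mathbb{F}_2}\{v_\mu, \ldots, v_{m-1}\}$, an $(m-\mu)$-dimensional $\mathbb{F}_2$-subspace, and $s_\mu$ is $\mathbb{F}_2$-linear with kernel $\mathrm{span}\{v_0, \ldots, v_{\mu-1}\}$ intersecting it trivially, so $V := \{c_i\}$ is itself an $(m-\mu)$-dimensional $\mathbb{F}_2$-subspace. The main obstacle is then the standard power-sum identity $\sum_{v \in V} v^\ell = 0$ for $0 \leq \ell \leq |V| - 2$, which I would derive from the fact that the subspace polynomial $\prod_{v \in V}(y - v)$ is linearized (only exponents $y^{2^j}$ appear): matching the Laurent expansion of $\sum_v (y - v)^{-1} = \sum_\ell p_\ell\, y^{-\ell-1}$ at infinity with the logarithmic-derivative formula forces the first $|V| - 1$ power sums $p_\ell$ to vanish. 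Since our exponent range $\ell \in \{0, 1, \ldots, 2^{m-\mu}-2\}$ fits exactly within this window, every inner sum is zero, yielding the claimed identity.
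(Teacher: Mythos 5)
Your proof is correct. Note that the paper itself does not prove this lemma at all: it is imported verbatim as \cite[Lemma 10]{2016FFT}, so there is no in-paper argument to compare against. Your derivation is a sound self-contained one: the identification of $IFFT_{\bar{X}}(\mathbf{F}_i,\mu,\omega_{(i-1)2^\mu})$ with $f(x) \bmod \bigl(s_\mu(x)+c_i\bigr)$ is justified by the coset structure $\omega_{(i-1)2^\mu+\ell}=\omega_\ell+\omega_{(i-1)2^\mu}$ and the linearity of $s_\mu$; the degree bound $\deg f < 2^m-2^\mu$ gives exactly $L\leq 2^{m-\mu}-2$ in the base-$s_\mu$ expansion; the image set $V=s_\mu(U)$ is indeed an $(m-\mu)$-dimensional $\mathbb{F}_2$-subspace because $\ker s_\mu$ meets $U=\mathrm{span}\{v_\mu,\dots,v_{m-1}\}$ trivially; and the logarithmic-derivative argument for $\sum_{v\in V}v^\ell=0$, $0\leq\ell\leq|V|-2$, is valid since $P'(y)$ reduces to the constant $b_0$ for a linearized polynomial in characteristic $2$, so the Laurent expansion of $P'/P$ at infinity starts at $y^{-|V|}$. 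The exponent window $\{0,\dots,2^{m-\mu}-2\}$ matches the range of $\ell$ exactly, so every term vanishes and linearity of the coefficient map in the basis $\bar{X}$ finishes the claim. This is essentially the standard route taken in the FFT literature (reduction modulo coset vanishing polynomials plus vanishing of subspace power sums), so while I cannot certify it matches the proof in \cite{2016FFT} line by line, it is a complete and correct justification of the statement as used here.
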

According to Lemma \ref{systematicRS}, we can perform systematic encoding for 
$(n=2^m,k=2^m-2^\mu)$ RS codes by setting $\mathbf{F}_2,\mathbf{F}_3,\cdots,\mathbf{F}_{2^{m-\mu}}$ as data symbols, and the parity symbol vector $\mathbf{F}_1$ can be calculated as follows
\begin{align*}
	\mathbf{F}_1&=FFT_{\bar{X}}(IFFT_{\bar{X}}(\mathbf{F}_2,\mu,\omega_{2^{\mu}})+\cdots\\
 &+IFFT_{\bar{X}}(\mathbf{F}_{2^{m-\mu}},\mu,\omega_{2^m-2^{\mu}}),\mu,\omega_0).
\end{align*}
Since the encoding process involves $(n/T-1)$ $T$-point LCH-FFT algorithms and one $T$-point LCH-IFFT algorithm, its complexity is $O(T\lg(T))+(n/T-1)O(T\lg(T))=O(n\lg(n-k)).$

\subsection{The I-FDMA Algorithm}
In the following, we present the I-FDMA algorithm for solving the error locator polynomial, which is based on our new termination mechanism of MA method in Section \ref{sec:3}. 

We show the I-FDMA algorithm in Algorithm \ref{alg.02}.
Recall that in Eq. \eqref{eq.02}, when $e\leq t$, we have $\deg(z(x))<\deg(\lambda(x))=e\leq t=2^{\mu-1}$.
Our I-FDMA algorithm only needs to update $t+1$ evaluations $\{w^{(r)}(\omega_i),v^{(r)}(\omega_i)\}_{i=0}^{t}$ of $w^{(r)}(x),v^{(r)}(x)$ for $r\in\{1,2,\cdots,2t\}$, and output the $t+1$ evaluations $\{\lambda(\omega_i)\}_{i=0}^{t}$ of the error locator polynomial. Note that our I-FDMA algorithm does not need to update the evaluations of two polynomials $n^{(r)}(x)$ and $v^{(r)}(x)$. Given an integer $r$ with $r\in\{0,1,\cdots,2t-1\}$, when $d_i^{(r)}=0$ for all $i=r,r+1,\cdots,2t-1$, then our I-FDMA algorithm can output $\{w^{(r)}(\omega_i)\}_{i=0}^{t}=\{\lambda(\omega_i)\}_{i=0}^{t}$. This is because the roots of $w^{(r)}(x)$ are the $e$ error positions when $r=2e$ according to Lemma \ref{lem.07} and Lemma \ref{lem.08}. 
Once $\{\lambda(\omega_i)\}_{i=0}^{t}$ are obtained, 
we can solve the error locator polynomial $\lambda(x)$ by employing Algorithm \ref{extendedIFFT}.

Note that in Algorithm \ref{alg.02}, line 14 requires three multiplications and two additions, and line 17 requires three multiplications and two additions. Therefore, the number of multiplications involved in I-FDMA algorithm is
\begin{align}\label{eqcomlexitymul}
    \sum_{r=0}^{2e-1}(3(2t-r-1)+3(t+1))=18et-6e^2+3e,
\end{align}
and the number of additions involved in I-FDMA algorithm is
\begin{align}\label{eqcomlexityadd}
    \sum_{r=0}^{2e-1}(2(2t-r-1)+2(t+1))=12et-4e^2+2e.
\end{align}

\begin{algorithm}[htb]
    \caption{The I-FDMA Algorithm}
    \label{alg.02}
    \begin{algorithmic}[1]
        \REQUIRE $\{s(\omega_i)\}_{i=0}^{2t-1}$
        \ENSURE $(\lambda^{(r+1)}(\omega_0),\lambda^{(r+1)}(\omega_1),\cdots,\lambda^{(r+1)}(\omega_t))$
        \STATE Initialization:
        \STATE $\begin{pmatrix}
d^{(0)}_i \\
g^{(0)}_i
\end{pmatrix}=\begin{pmatrix}
-s(\omega_i) \\
1
\end{pmatrix},i=0,1,\cdots,2t-1;$ 
\STATE $\begin{pmatrix}
W^{(0)}_i \\
V^{(0)}_i
\end{pmatrix}=\begin{pmatrix}
1 \\
0
\end{pmatrix},i=0,1,\cdots,t;$
\STATE $\begin{pmatrix}
R^{(0)}_0 \\
R^{(0)}_1
\end{pmatrix}=\begin{pmatrix}
0 \\
1
\end{pmatrix}.$
    \FOR{$r=0,1,\cdots,2t-1$} 
    \IF{($d^{(r)}_r=0$) or ($(R^{(r)}_0>R^{(r)}_1)$ and $g^{(r)}_r\neq0$)}
        \STATE Let $\Psi_r(\omega_i)=\begin{pmatrix}
 -g_r^{(r)} & d_r^{(r)}\\
 0 & \omega_{i}-\omega_{r}
\end{pmatrix}$ for $i=r+1,r+2,\cdots,2t-1$
\STATE $R^{(r+1)}_0=R^{(r)}_0,R^{(r+1)}_1=R^{(r)}_1+2$
    \ELSE
    \STATE Let $\Psi_r(\omega_i)=\begin{pmatrix}
 -g_r^{(r)} & d_r^{(r)}\\
 \omega_{i}-\omega_{r} & 0
\end{pmatrix}$ for $i=r+1,r+2,\cdots,2t-1$
\STATE $R^{(r+1)}_0=R^{(r)}_1,R^{(r+1)}_1=R^{(r)}_0+2$
    \ENDIF
\FOR{$i=r+1,r+2,\cdots,2t-1$}
\STATE $\begin{pmatrix}
d^{(r+1)}_i \\
g^{(r+1)}_i
\end{pmatrix}=\Psi_r(\omega_i)\cdot\begin{pmatrix}
d^{(r)}_i \\
g^{(r)}_i
\end{pmatrix}$
\ENDFOR
\FOR{$i=0,1,\cdots,t$}
\STATE $\begin{pmatrix}
W^{(r+1)}_i \\
V^{(r+1)}_i
\end{pmatrix}=\Psi_r(\omega_i)\cdot\begin{pmatrix}
W^{(r)}_i \\
V^{(r)}_i
\end{pmatrix}$
\ENDFOR
\IF{($d_{i}^{(r+1)}=0$ for $r+1\leq i\leq2t-1$) or ($r=2t-1$)}
\RETURN $(W^{(r+1)}_0,W^{(r+1)}_1,\cdots,W^{(r+1)}_{t})$
\ENDIF
    \ENDFOR
    \end{algorithmic}
\end{algorithm}

\subsection{Our First Decoding Algorithm}
In the following, we introduce our first decoding algorithm for LCH-FFT-based RS codes based on the I-FDMA algorithm. 

Assume that the received vector is 
\begin{align*}
    \mathbf{r}
    &=(r_0,r_1,\cdots,r_{2^m-1})\\
    &=\mathbf{F}+\mathbf{e}\\
    &=(f(\omega_0),f(\omega_1),\cdots,\omega_{2^m-1})+(e_0,e_1,\cdots,e_{2^m-1}),
\end{align*}
where $\mathbf{e}$ is the error pattern vector. The index set of error positions is $E=\{i|e_i\neq 0,i\in\{0,1,\cdots,2^m-1\}\}\subset[2t,2^m-1]$, and $|E|=e\leq t$.
Let $\mathbf{r}_i:=\{r_{i\cdot2^\mu},r_{i\cdot2^\mu+1},\cdots,r_{i\cdot2^\mu+2^\mu-1}\}$, for $i\in\{0,1,\cdots,2^{m-\mu}-1\}$.
Now, we introduce the decoding process.

First, we compute the syndrome polynomial $s(x)$ by Algorithm \ref{FFT} as follows \cite{2022MA},
\begin{align*}
	\sum_{i=0}^{2^\mu-1}IFFT_{\bar{X}}(\mathbf{r}_i,\mu,\omega_{i\cdot2^\mu})/p_{2^m-2^\mu}.
\end{align*}
Then we can calculate $2t$ syndromes $\{s(\omega_i)\}_{i=0}^{2t-1}$ (i.e., the input of Algorithm \ref{alg.02}) by $FFT_{\bar{X}}(\bm{s}_{\bar{X}},\mu,\omega_0)$.
Then we can get the $t+1$ values $\{\lambda(\omega_i)\}_{i=0}^{t}$ of the error locator polynomial $\lambda(x)$ by I-FDMA Algorithm \ref{alg.02}.
Next, according to Eq. \eqref{eq.02}, we can compute 
\begin{align}
z(\omega_i)=s(\omega_i)\lambda(\omega_i)=s(\omega_i)w(\omega_i),i=0,1,\cdots,t.\label{computez}
\end{align}
Then we can compute $z(x)$ and $\lambda(x)$ by Algorithm \ref{extendedIFFT}. 
After obtaining $\lambda(x)$, we can use the Chien search method to find the roots of $\lambda(x)$ by Algorithm \ref{FFT} as follows,
\begin{align}\label{achien}
FFT_{\bar{X}}(\bm{\lambda}_{\bar{X}},\mu,\omega_{i\cdot2^{\mu}}),\ i=0,1,\cdots,2^{m-\mu}-1.
\end{align}
Finally, we use Forney’s formula to compute the error values as follows,
\begin{align}
	f(\omega_\ell)-r(\omega_\ell)=\frac{z(\omega_\ell)}{s_\mu(\omega_\ell)\lambda'(\omega_\ell)}\ , \forall \ell\in E,\label{forney}
\end{align}
where $\lambda'(x)$ represents the formal derivative of $\lambda(x)$.
Please refer to \cite{2022MA} for the derivation of the above processes.

We summarize our first decoding algorithm in Algorithm~\ref{RSdecoding1}. The computational complexity of computing syndrome polynomial and syndromes, Chien search, formal derivative, and Forney’s formula is $O(n\lg(n-k))$ \cite{2022MA,2016FFT}.
Note that Algorithm \ref{RSdecoding1} can be generalized to any parameters $n$ and $k$ with asymptotic complexity $O(n\lg(n-k)+(n-k)\lg^2(n-k))$ \cite{Tang_2022}. Please refer to \cite{Tang_2022} for relevant details.

\begin{algorithm}[htb]
	\caption{Our First Decoding Algorithm}
	\label{RSdecoding1}
	\begin{algorithmic}[1]
		\REQUIRE Received vector $\mathbf{r}=\mathbf{F}+\mathbf{e}$
		\ENSURE The codeword $\mathbf{F}$
		\STATE Compute the syndrome polynomial $s(x)$ and syndromes $\{s(\omega_i)\}_{i=0}^{2^\mu-1}$
		\STATE Compute $\{\lambda(\omega_i)\}_{i=0}^{2^{\mu-1}}$ by I-FDMA Algorithm \ref{alg.02}, and get the error locator polynomial $\lambda(x)$ by Algorithm \ref{extendedIFFT}
		\STATE Compute $\{z(\omega_i)\}_{i=0}^{2^{\mu-1}}$, and get $z(x)$ by Algorithm \ref{extendedIFFT}
		\STATE Find all error positions by Chien search
		\STATE
		Compute the formal derivative of $\lambda(x)$, and then get the error pattern $\mathbf{e}$ by Forney's formula 
		\RETURN The codeword $\mathbf{F}=\mathbf{r}+\mathbf{e}$
	\end{algorithmic}
\end{algorithm}

\section{Our Second Decoding Algorithm}\label{sec:five}

In this section, we give our second decoding algorithm for RS codes. Our second decoding algorithm has lower decoding complexity than our first decoding algorithm when $e\ll t$. 
In the following, we still consider the LCH-FFT-based $(n=2^m,k=2^m-2^\mu)$ RS codes as in Section \ref{sec:4}.

The general idea of our second decoding algorithm is as follows: \textbf{Step (1):} we first determine the number of errors $e$ by the $t_0$-Shortened I-FDMA ($t_0$-SI-FDMA) algorithm (which is presented in the next subsection); \textbf{Step (2):} then we iterate the Early-Stopped Berlekamp–Massey (ESBM) algorithm \cite{2007bm} $2e$ steps to find the error locator polynomial; \textbf{Step (3):} finally, together with LCH-FFT related algorithms in Section \ref{sec:4.1}, we find the $e$ error locations and error values. 
The main difference between our second decoding algorithm and our first decoding algorithm in Algorithm \ref{RSdecoding1} is as follows. We employ I-FDMA algorithm to solve the error locator polynomial in our first decoding algorithm. While in our second decoding algorithm, we first propose the $t_0$-SI-FDMA algorithm to determine the number of errors $e$ and then employ ESBM algorithm to solve the error locator polynomial. We will show that our second decoding algorithm has lower multiplication complexity than our first decoding algorithm in Algorithm \ref{RSdecoding1} when $e\ll t$.

\subsection{The $t_0$-SI-FDMA Algorithm}
In the following, we propose the $t_0$-SI-FDMA algorithm that can quickly determine the number of errors $e$.

The idea behind the $t_0$-SI-FDMA algorithm is based on the following two observations. First, we can see that lines 16 to 18 in I-FDMA Algorithm \ref{alg.02} do not affect the values of $\{d_{r}^{(r)},g_r^{(r)},R_0^{(r)},R_1^{(r)}\}_{r=0}^{2t}$, which are the core parameters that affect the iterations of Algorithm \ref{alg.02}. Note that lines 16 to 18 in Algorithm \ref{alg.02} are just updating the $t+1$ values of the error locator polynomial.
Therefore, if we delete lines 16 to 18 from Algorithm \ref{alg.02}, and when the algorithm terminates, we output $R_0^{(r)}/2$ (note that $R_0^{(2e)}=2e$ according to Lemma \ref{lem.07}), then $R_0^{(r)}/2$ must be the number of errors $e$.
Second, for a given even number $t_0$ and $t_0<2t$, we will show (refer to Theorem \ref{tsi}) that Algorithm \ref{alg.02} does not need to input all $2t$ syndromes $\{s(\omega_{i})\}_{i=0}^{2t-1}$, but only $t_0+1$ syndromes $\{s(\omega_i)\}_{i=0}^{t_0}$, and we can still output the number of errors $e$ after $2e$ iterations when $2e<t_0+1$.

We give the $t_0$-SI-FDMA algorithm in Algorithm \ref{alg.03}.
The following theorem shows the key properties of Algorithm \ref{alg.03}.

\begin{algorithm}[htb]
    \caption{The $t_0$-SI-FDMA Algorithm}
    \label{alg.03}
    \begin{algorithmic}[1]
        \REQUIRE $\{s(\omega_i)\}_{i=0}^{t_0}$
        \ENSURE The number of errors $e$
        \STATE Initialization:
        \STATE $\begin{pmatrix}
d^{(0)}_i \\
g^{(0)}_i
\end{pmatrix}=\begin{pmatrix}
-s(\omega_i) \\
1
\end{pmatrix},i=0,1,\cdots,t_0;$ 
\STATE $\begin{pmatrix}
R^{(0)}_0 \\
R^{(0)}_1
\end{pmatrix}=\begin{pmatrix}
0 \\
1
\end{pmatrix}.$
    \FOR{$r=0,1,\cdots,t_0$} 
    \IF{($d^{(r)}_r=0$) or ($(R^{(r)}_0>R^{(r)}_1)$ and $g^{(r)}_r\neq0$)}
        \STATE Let $\Psi_r(\omega_i)=\begin{pmatrix}
 -g_r^{(r)} & d_r^{(r)}\\
 0 & \omega_{i}-\omega_{r}
\end{pmatrix}$ for $i=r+1,r+2,\cdots,t_0$
\STATE $R^{(r+1)}_0=R^{(r)}_0,R^{(r+1)}_1=R^{(r)}_1+2$
    \ELSE
    \STATE Let $\Psi_r(\omega_i)=\begin{pmatrix}
 -g_r^{(r)} & d_r^{(r)}\\
 \omega_{i}-\omega_{r} & 0
\end{pmatrix}$ for $i=r+1,r+2,\cdots,t_0$
\STATE $R^{(r+1)}_0=R^{(r)}_1,R^{(r+1)}_1=R^{(r)}_0+2$
    \ENDIF
\FOR{$i=r+1,r+2,\cdots,t_0$}
\STATE $\begin{pmatrix}
d^{(r+1)}_i \\
g^{(r+1)}_i
\end{pmatrix}=\Psi_r(\omega_i)\cdot\begin{pmatrix}
d^{(r)}_i \\
g^{(r)}_i
\end{pmatrix}$
\ENDFOR
\IF{$d_{i}^{(r+1)}=0$ for $r+1\leq i\leq t_0$}
\RETURN $R_0^{(r)}/2$
\ENDIF
    \ENDFOR
    \end{algorithmic}
\end{algorithm}

\begin{theorem}\label{tsi}
Let $t_0$ be an even number. When $2e<t_0+1$, Algorithm \ref{alg.03} can output the number of errors $e$; if the algorithm has no output, then there must be $2e>t_0+1$.
\end{theorem}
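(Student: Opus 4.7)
The plan is to show that Algorithm~\ref{alg.03} on the truncated syndrome vector $\{s(\omega_i)\}_{i=0}^{t_0}$ faithfully reproduces the state of Algorithm~\ref{alg.01} on the indices it tracks, and then to read off the termination behaviour from Lemmas~\ref{lem.06}--\ref{lem.08}. First, by inspection of the update rules, the quantities $d_j^{(r)}, g_j^{(r)}, R_0^{(r)}, R_1^{(r)}$ produced by Algorithm~\ref{alg.03} coincide with those produced by Algorithm~\ref{alg.01} for every $0 \leq r \leq t_0 + 1$ and every $r \leq j \leq t_0$, since the update $(d_j^{(r+1)}, g_j^{(r+1)})^\top = \Psi_r(\omega_j)(d_j^{(r)}, g_j^{(r)})^\top$ is local in $j$ and the branch selection depends only on $(d_r^{(r)}, g_r^{(r)}, R_0^{(r)}, R_1^{(r)})$. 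In particular Lemmas~\ref{lem.06}--\ref{lem.08} transfer verbatim to the variables tracked by Algorithm~\ref{alg.03}.

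Since $2e$ and $t_0$ are both even, $2e < t_0 + 1$ is equivalent to $2e \leq t_0$. Under this hypothesis, Lemma~\ref{lem.08} gives $d_j^{(2e)} = 0$ for $2e \leq j \leq 2t - 1$, and in particular for the subrange $2e \leq j \leq t_0$. Therefore at iteration $r = 2e - 1$ the check on line~15 of Algorithm~\ref{alg.03} is satisfied, and by Lemma~\ref{lem.06}---which gives $(R_0^{(2e)}, R_1^{(2e)}) = (2e, 2e+1)$---the value returned equals $R_0^{(2e)}/2 = e$.

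The main obstacle is to rule out an earlier firing of line~15. I would argue this by contradiction: if the predicate were satisfied at some iteration $r^\ast$ with $r^\ast + 1 < 2e$, so that $d_j^{(r^\ast + 1)} = 0$ for all $r^\ast + 1 \leq j \leq t_0$, then iterating the identity $d_j^{(i+1)} = -g_i^{(i)} d_j^{(i)} + d_i^{(i)} g_j^{(i)}$ together with $g_i^{(i)} \neq 0$ (Lemma~\ref{lem.01}) forces $d_j^{(i)} = 0$ throughout $r^\ast + 1 \leq i \leq j \leq t_0$. In particular $d_i^{(i)} = 0$ on this range, so the branch in line~7 (rather than line~11) of Algorithm~\ref{alg.01} is taken at every such step, which keeps $R_0$ constant at $R_0^{(r^\ast + 1)}$ and increments $R_1$ by $2$ each time. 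Combining with Lemma~\ref{lem.04} (monotonicity of $\mathrm{Even}(R_0, R_1)$) and Lemma~\ref{lem.06} (which identifies $r_0 = 2e$ as the first iteration at which $R_0 = 2e$), one reaches a contradiction to $r^\ast + 1 < 2e$.

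The second assertion is the contrapositive of the first: if the algorithm produced no output and yet $2e < t_0 + 1$, the previous steps would give output $e$, so ``no output'' forces $2e \geq t_0 + 1$, and the parity mismatch (even versus odd) sharpens this to $2e > t_0 + 1$. Overall the delicate step is the ``no premature termination'' argument, because the paper's Lemmas~\ref{lem.07}--\ref{lem.08} are stated for the full range $[r, 2t - 1]$ rather than the truncated range $[r, t_0]$; once that is settled, everything else is a direct application of the transferred lemmas.
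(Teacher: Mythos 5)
Your proposal is correct and follows essentially the same route as the paper: Lemma~\ref{lem.08} guarantees the termination test fires at $r=2e-1$, and premature firing is ruled out by propagating the zero discrepancies forward to force $d_{2e-1}^{(2e-1)}=0$, so that the line-7 branch keeps $R_0$ constant and contradicts the minimality of $r_0=2e$ established in (the proof of) Lemma~\ref{lem.06}. Your explicit preliminary observation that truncating the syndrome list to indices $\leq t_0$ leaves the tracked state variables $d_j^{(r)},g_j^{(r)},R_0^{(r)},R_1^{(r)}$ unchanged is left implicit in the paper, but it is the same underlying argument.
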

\begin{proof}
If $2e<t_0+1$, we can see that $d_{j}^{(2e)}=0$ for $j=2e,2e+1,\cdots,t_0$ by Lemma \ref{lem.08}, which means that line 15 of Algorithm \ref{alg.03} is true when $r=2e-1$, and Algorithm \ref{alg.03} must be terminated in no more than $2e$ iterations.
Note that the above analysis has shown that if Algorithm \ref{alg.03} does not output any value, it means that there must be $2e>t_0+1$.

Next, we only need to show that Algorithm \ref{alg.03} will be terminated strictly after $2e$ iterations.
We prove it by contradiction.
Assume that Algorithm \ref{alg.03} terminates after $u$ iterations, where $u<2e$. Then we have $d_{j}^{(u)}=0$ for $j=u,u+1,\cdots,t_0$.
According to line 13, we have 
\begin{align}
    d_{j}^{(u+1)}=-g_{u}^{(u)}d_{j}^{(u)}+d_{u}^{(u)}g_{j}^{(u)}=0,\nonumber
\end{align}
for $j=u+1,u+2,\cdots,t_0$.
By analogy, for any $a=u,u+1,\cdots,t_0$ and $b=a,a+1,\cdots,t_0$, we have $d_{b}^{(a)}=0$. By taking $a=2e-1<t_0$ and $b=a$, we have $d_b^{(a)}=d_{2e-1}^{(2e-1)}=0$.
Then, the judgment condition of line 5 in the $2e$-th iteration (i.e., $r=2e-1$) of Algorithm \ref{alg.03} is true, so that line 7 is executed, and thus there is $R_{0}^{(2e)}=R_{0}^{(2e-1)}$. However, according to Lemma \ref{lem.06}, the smallest integer $r_0$ such that $R_0^{(r_0)}=2e$ is $r_0=2e$, which contradicts with $R_{0}^{(2e)}=R_{0}^{(2e-1)}=2e$.
To sum up, Algorithm \ref{alg.03} will terminate strictly after 2e iterations and the theorem is proved.
\end{proof}

According to Theorem \ref{tsi}, given an even number $t_0$, 
the $t_0$-SI-FDMA algorithm is more suitable for the case where $e$ is relatively small (i.e., $2e<t_0+1$). If the $t_0$-SI-FDMA algorithm does not return any value, it means that $e$ is relatively large (i.e., $2e>t_0+1$), then we can employ the first decoding algorithm in Algorithm \ref{RSdecoding1}. Note that the values of $\{d_{j}^{(i)},g_{j}^{(i)}\}_{i=0,1,\cdots,t_0}^{j=i,i+1,\cdots,t_0}$ have been calculated by the $t_0$-SI-FDMA algorithm and they do not need to be calculated again when performing I-FDMA Algorithm in our first decoding algorithm. Therefore, the computational complexity of Algorithm \ref{RSdecoding1} will not be increased, when $2e>t_0+1$.

\subsection{Parity-Check Matrix and Syndromes}
In the following, unless otherwise specified, we assume that $2e<t_0+1$.
After \textbf{Step (1)}, we can get the number of errors $e$ by $t_0$-SI-FDMA algorithm.
We first analyze the parity-check matrix and the corresponding syndromes for LCH-FFT-based $(n=2^m,k=2^m-2^\mu)$ RS codes
to show that we can apply the ESBM algorithm \cite{2007bm} to find the error locator polynomial in \textbf{Step (2)}.

First, we introduce the key Theorem \cite[p.167]{104998} for deriving the parity-check matrix of LCH-FFT-based RS codes.

\begin{theorem}\cite[p.167]{104998}\label{paritycheck}
	Consider any $(n,k)$ RS code over $\mathbb{F}_{2^m}=\{\omega_i\}_{i=0}^{2^m-1}$ ($0<k<n\leq 2^m$) which is given by
 \begin{align*}
     \{f(\omega_0),f(\omega_1),\cdots,f(\omega_{n-1})|f(x)\in\mathbb{F}_{2^m}[x], \deg(f(x))<k\},
 \end{align*}
 then the dual code of the $(n,k)$ RS code can be given by
 \begin{align*}
     &\{\mu_0g(\omega_0),\mu_1g(\omega_1),\cdots,\mu_{n-1}g(\omega_{n-1})|\\
     &g(x)\in\mathbb{F}_{2^m}[x], \deg(g(x))<n-k\},
 \end{align*}
 where $\mu_i=\frac{1}{\prod_{i\neq j}(\omega_i-\omega_j)}$.
\end{theorem}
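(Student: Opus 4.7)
The plan is to establish Theorem~\ref{paritycheck} by the two standard ingredients of dual-code verification: a dimension count and a pairwise orthogonality check. Because both the primal and candidate dual codes are described as images of an evaluation map of a polynomial space, both steps reduce to linear algebra over $\mathbb{F}_{2^m}$ together with one Lagrange-interpolation identity.

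First I would fix notation: let $C$ be the primal code (evaluations at $\omega_0,\ldots,\omega_{n-1}$ of polynomials $f$ with $\deg f<k$) and $C^\perp_{\text{cand}}$ the candidate dual code (vectors $(\mu_0 g(\omega_0),\ldots,\mu_{n-1} g(\omega_{n-1}))$ with $\deg g<n-k$). Since the $\omega_i$ are distinct and $n\leq 2^m$, the evaluation maps $f\mapsto (f(\omega_i))_i$ and $g\mapsto (g(\omega_i))_i$ are injective on polynomials of degree $<n$, so $\dim C=k$ and $\dim C^\perp_{\text{cand}}=n-k$. Thus it suffices to show $C^\perp_{\text{cand}}\subseteq C^\perp$, i.e., that for every admissible $f$ and $g$,
\begin{equation}
\sum_{i=0}^{n-1} f(\omega_i)\cdot \mu_i g(\omega_i)=0.\nonumber
\end{equation}

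The second step is the orthogonality identity itself. Set $h(x):=f(x)g(x)$, so $\deg h\leq (k-1)+(n-k-1)=n-2$. The key lemma I would prove (or invoke) is: for any polynomial $h\in\mathbb{F}_{2^m}[x]$ with $\deg h\leq n-2$,
\begin{equation}
\sum_{i=0}^{n-1} \mu_i h(\omega_i)=0, \qquad \mu_i=\frac{1}{\prod_{j\neq i}(\omega_i-\omega_j)}.\nonumber
\end{equation}
The cleanest way to see this is Lagrange interpolation: the unique polynomial $L(x)$ of degree $<n$ agreeing with $h$ at the nodes $\omega_0,\ldots,\omega_{n-1}$ is
\begin{equation}
L(x)=\sum_{i=0}^{n-1} h(\omega_i)\,\mu_i\prod_{j\neq i}(x-\omega_j),\nonumber
\end{equation}
and the coefficient of $x^{n-1}$ in $L(x)$ is exactly $\sum_i \mu_i h(\omega_i)$. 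Since $\deg h\leq n-2<n$, uniqueness gives $L=h$, and the $x^{n-1}$-coefficient of $h$ is $0$. Applying this to $h=fg$ yields the desired inner-product identity, completing the proof.

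The only place where care is genuinely needed is the Lagrange-interpolation step, because it must be carried out in characteristic $2$: one has to verify that the standard identification of $\sum_i \mu_i h(\omega_i)$ with the top coefficient of the interpolant uses only the multiplicative structure of $\mathbb{F}_{2^m}$ and never divides by an element that vanishes in characteristic $2$. This is indeed the case since the only inversions occur in the $\mu_i$, which are products of nonzero differences $\omega_i-\omega_j$. Once this is granted, both the dimension count and the orthogonality follow routinely, and because $\dim C+\dim C^\perp_{\text{cand}}=n$, the inclusion $C^\perp_{\text{cand}}\subseteq C^\perp$ upgrades to equality, finishing the theorem.
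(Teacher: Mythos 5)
Your proof is correct. The paper does not prove this statement at all---it is imported verbatim from the cited reference \cite{104998} as a known classical fact about (generalized) Reed--Solomon duality---so there is no in-paper argument to compare against. Your route (injectivity of the evaluation maps for the dimension count, plus the Lagrange-interpolation identity $\sum_i \mu_i h(\omega_i)=0$ for $\deg h\leq n-2$ applied to $h=fg$, then upgrading the inclusion to equality by dimensions) is the standard textbook proof and is carried out soundly; the worry about characteristic $2$ is unnecessary but harmless, since the argument only ever inverts the nonzero differences $\omega_i-\omega_j$.
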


According to Theorem \ref{paritycheck}, when $n=2^m$, $\{\omega_i\}_{i=0}^{n-1}$ form a linear space over $\mathbb{F}_2$, then $\mu_0=\mu_1=\cdots=\mu_{n-1}$, and the parity-check matrix of LCH-FFT-based $(n=2^m,k=2^m-2^\mu)$ RS codes $\mathbf{H}$ is an $(n-k)\times n$ Vandermonde matrix, specially, 
\begin{align}\label{parityH}
     \mathbf{H}=\begin{bmatrix}
 1 & 1 & 1 & 1& 1\\
 \omega_0 & \omega_1 & \omega_2 & \cdots & \omega_{2^m-1}\\
 \omega_0^2 & \omega_1^2 & \omega_2^2 & \cdots & \omega_{2^m-1}^2\\
 \vdots & \vdots & \vdots & \ddots & \vdots\\
 \omega_0^{n-k-1} & \omega_1^{n-k-1} & \omega_2^{n-k-1} & \cdots & \omega_{2^m-1}^{n-k-1}
\end{bmatrix}.
 \end{align}

Based on Eq. \eqref{parityH}, we have
\begin{align}\label{syndrome}
    \mathbf{H}\cdot\mathbf{r}^T=\mathbf{H}\cdot\mathbf{F}^T+\mathbf{H}\cdot\mathbf{e}^T=\mathbf{H}\cdot\mathbf{e}^T.
\end{align}
Let $\mathbf{H}\cdot\mathbf{r}^T:=(S_0,S_1,\cdots,S_{2t-1})^T$, and $\{S_i\}_{i=0}^{2t-1}$ be the $2t$ syndromes. 
Suppose there are $e<\frac{t_0}{2}$ errors, and $e_{i_j}\neq 0$ for any $j\in[e]$, where $E=\{i_j\}_{j\in[e]}\in[2t,n-1]$.
Let $\beta_j:=\omega_{i_j}$ and $\delta_j=e_{i_j}$, for $j\in[e]$.
According to Eq. \eqref{syndrome}, we have 
\begin{align}\label{eqsss}
    S_i=\beta_1^i\delta_1+\beta_2^i\delta_2+\cdots+\beta_e^i\delta_e, \ \forall i\in\{0,1,\cdots,2t-1\}.
\end{align}
Let $$\sigma(x):=\prod_{j\in[e]}(1-\beta_jx)=1+\sigma_1x+\sigma_2x^2+\cdots+\sigma_ex^e$$
be the error locator polynomial of our second decoding method (which differs from that of Algorithm \ref{RSdecoding1}).
Once the unknown coefficients $\{\sigma_i\}_{i\in[e]}$ of $\sigma(x)$ are found, all error positions can be obtained by searching the roots of $\sigma(x)$.

According to the definition of $\sigma(x)$ and Eq. \eqref{eqsss}, the relationship between the syndromes and the coefficients of the error locator polynomial $\sigma(x)$ can be obtained as follows \cite[p.261]{Moon2005ErrorCC},
\begin{align}\label{eqsss2}
    \sum_{\ell=0}^{e}\sigma_{\ell}S_{j+e-\ell}=0,\ \forall j\in\{0,2t-1-e\},
\end{align}
where, we define $\sigma_0=1$.

Let $\mathbf{S}$ be the $t\times (t+1)$ Hankel matrix \cite{2007bm} formed by all $2t$ syndromes $\{S_{i}\}_{i=0}^{2t-1}$, i.e., 
\begin{align*}
    \mathbf{S}=\begin{pmatrix}
	S_0  & S_1 & \cdots & S_{t-1}&S_t\\
	S_1 & S_2 & \cdots & S_t&S_{t+1}\\
	\vdots & \vdots & \ddots& \vdots&\vdots\\
	S_{t-1} & S_t & \cdots & S_{2t-2}&S_{2t-1}\\
\end{pmatrix}.
\end{align*}
Next, we show that the $e\times e$ sub-matrix in the upper left corner of $\mathbf{S}$ is invertible.

\begin{lemma}\label{inv-S}
    The matrix $\mathbf{S}_{e\times e}$ is invertible, where $\mathbf{S}_{e\times e}$ represents the $e\times e$ sub-matrix in the upper left corner of $\mathbf{S}$, i.e.,
    $$\mathbf{S}_{e\times e}=\begin{pmatrix}
	S_0  & S_1 & \cdots & S_{e-1}\\
	S_1 & S_2 & \cdots & S_e\\
	\vdots & \vdots & \ddots& \vdots\\
	S_{e-1} & S_e & \cdots & S_{2e-2}\\
\end{pmatrix}.$$
\end{lemma}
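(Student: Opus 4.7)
My plan is to exhibit an explicit factorization of $\mathbf{S}_{e\times e}$ as a product of matrices that are manifestly invertible, namely a Vandermonde matrix, a diagonal matrix of error values, and the transpose of the same Vandermonde matrix.

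Concretely, I would introduce the $e\times e$ Vandermonde matrix
\begin{equation*}
V=\begin{pmatrix}
1 & 1 & \cdots & 1\\
\beta_1 & \beta_2 & \cdots & \beta_e\\
\vdots & \vdots & \ddots & \vdots\\
\beta_1^{e-1} & \beta_2^{e-1} & \cdots & \beta_e^{e-1}
\end{pmatrix},
\qquad
D=\mathrm{diag}(\delta_1,\delta_2,\ldots,\delta_e).
\end{equation*}
Using the power-sum expression for the syndromes from Eq.~\eqref{eqsss}, the $(a,b)$-entry of $\mathbf{S}_{e\times e}$ (with $a,b\in\{1,\dots,e\}$) is $S_{a+b-2}=\sum_{j=1}^{e}\beta_j^{a-1}\,\delta_j\,\beta_j^{b-1}$, which is exactly the $(a,b)$-entry of $V D V^{T}$. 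Hence $\mathbf{S}_{e\times e}=V D V^{T}$.

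It remains to argue that each factor is invertible. Since $\beta_1,\ldots,\beta_e$ are the distinct field elements indexing the $e$ error locations (they come from the distinct indices in $E\subset\{i\}_{i=2t}^{n-1}$ and $\{\omega_i\}$ are pairwise distinct), the standard Vandermonde determinant formula gives $\det V=\prod_{1\le i<j\le e}(\beta_j-\beta_i)\neq 0$. Since $\delta_j=e_{i_j}\neq 0$ by the definition of the error set $E$, we have $\det D=\prod_{j=1}^{e}\delta_j\neq 0$. Therefore $\det \mathbf{S}_{e\times e}=(\det V)^2\det D\neq 0$, proving invertibility.

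There is essentially no obstacle here; the only subtlety is to set up the factorization cleanly and to remember that we do not need the $\beta_j$'s to be nonzero (distinctness suffices for the Vandermonde to be nonsingular), while we do need the $\delta_j$'s to be nonzero, which is guaranteed by the definition $E=\{i\mid e_i\neq 0\}$. So the proof is short: display the factorization $\mathbf{S}_{e\times e}=VDV^{T}$, verify it entrywise against Eq.~\eqref{eqsss}, and conclude by multiplicativity of determinants.
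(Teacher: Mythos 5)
Your proposal is correct and follows essentially the same route as the paper: both exhibit the factorization $\mathbf{S}_{e\times e}=V\,\mathrm{diag}(\delta_1,\ldots,\delta_e)\,V^{T}$ with $V$ the Vandermonde matrix in the $\beta_j$'s, verify it entrywise from Eq.~\eqref{eqsss}, and conclude via $\det(\mathbf{S}_{e\times e})=(\det V)^2\prod_{j=1}^{e}\delta_j\neq 0$ using distinctness of the $\beta_j$'s and nonvanishing of the $\delta_j$'s. No gaps.
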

\begin{proof}
    Let $\mathbf{W}$ be the $e\times e$ Vandermonde matrix as follows,
    $$\mathbf{W}:=\begin{pmatrix}
	1  & 1 & \cdots & 1\\
	\beta_1 & \beta_2 & \cdots & \beta_e\\
	\vdots & \vdots & \ddots& \vdots\\
	\beta_1^{e-1} & \beta_2^{e-1} & \cdots & \beta_e^{e-1}\\
\end{pmatrix}.$$
Then we can verify that 
$$\mathbf{W}\cdot\text{diag}(\delta_1,\delta_2,\cdots,\delta_e)\cdot\mathbf{W}^T=\mathbf{S}_{e\times e},$$
where $\text{diag}(\delta_1,\delta_2,\cdots,\delta_e)$ represents the diagonal matrix whose main diagonal elements are $\delta_1,\delta_2,\cdots,\delta_e$ respectively.

Therefore, we have 
$$\det(\mathbf{S}_{e\times e})=\det(\mathbf{W})^2\cdot\prod_{j=1}^{e}\delta_j.$$
Note that $\delta_j=e_{i_j}\neq 0$, and the Vandermonde matrix $\mathbf{W}$ is invertible since $\beta_i\neq \beta_j$ for $i\neq j\in[e]$.
We have $\det(\mathbf{S}_{e\times e})\neq 0$, i.e., $\mathbf{S}_{e\times e}$ is invertible. 
\end{proof}

According to Eq. \eqref{eqsss2}, we have
\begin{align}\label{onlyone}
    \mathbf{S}_{e\times e}\cdot\begin{bmatrix}
 -\sigma_e\\
 -\sigma_{e-1}\\
 \vdots\\
-\sigma_1
\end{bmatrix}=\begin{bmatrix}
 S_e\\
 S_{e+1}\\
 \vdots\\
 S_{2e-1}
\end{bmatrix}.
\end{align}
Recall that $\mathbf{S}_{e\times e}$ is invertible by Lemma \ref{inv-S}, thus Eq. \eqref{onlyone} has a unique solution, and we can see that we only need to calculate $2e$ syndromes $\{S_i\}_{i=0}^{2e-1}$ to solve the $e$ coefficients $\{\sigma_i\}_{i=1}^{e}$ of the error locator polynomial $\sigma(x)$.
 
\subsection{The S-ESBM Algorithm}
In the following, we review the Early-Stopped Berlekamp–Massey (ESBM) algorithm \cite{2007bm}, which can be used to efficiently solve Eq. \eqref{onlyone} to get the error locator polynomial $\sigma(x)$.

We summarize the problem that can be solved by ESBM algorithm in Theorem \ref{ESBM}. Please refer to \cite{2007bm} for more details.
\begin{theorem}\cite{2007bm}\label{ESBM}
    Given a positive integer $t$ and a $t\times (t+1)$ Hankel matrix $\mathbf{S}$, where
\begin{align*}
    \mathbf{S}=\begin{pmatrix}
	S_0  & S_1 & \cdots & S_{t-1}&S_t\\
	S_1 & S_2 & \cdots & S_t&S_{t+1}\\
	\vdots & \vdots & \ddots& \vdots&\vdots\\
	S_{t-1} & S_t & \cdots & S_{2t-2}&S_{2t-1}\\
\end{pmatrix}.
\end{align*}
For $i\in\{1,2,\cdots,t+1\}$, let $\mathbf{\xi}_i$ represent the $i$-th column of $\mathbf{S}$.
If there exists a positive integer $e\leq t$, and an $e\times 1$ unknown vector $(\Lambda_e,\Lambda_{e-1},\cdots,\Lambda_1)^T$, such that:
\begin{itemize}
    \item [(1)] The $e\times e$ sub-matrix in the upper left corner of $\mathbf{S}$ is invertible;
    \item [(2)] For any $1\leq j\leq t+1-e$, 
    $$\Lambda_e\mathbf{\xi}_j+\Lambda_{e-1}\mathbf{\xi}_{j+1}+\cdots+\Lambda_1\mathbf{\xi}_{e+j-1}+\mathbf{\xi}_{e+j}=\mathbf{0}_{t\times 1}.$$
\end{itemize}
Then, if $e$ is unknown, the ESBM algorithm needs to iterate $t+e$ steps to obtain $(\Lambda_e,\Lambda_{e-1},\cdots,\Lambda_1)^T$;
if $e$ is known, the ESBM algorithm needs to iterate $2e$ steps to obtain $(\Lambda_e,\Lambda_{e-1},\cdots,\Lambda_1)^T$.
\end{theorem}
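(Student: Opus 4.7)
The proof plan is to interpret condition (2) as a linear recurrence on the syndrome sequence and then invoke the classical correctness theorem of the Berlekamp--Massey (BM) algorithm together with the early-termination refinement of \cite{2007bm}. Reading the $\ell$-th coordinate of the vector equation in condition (2), and using the fact that successive rows of $\mathbf{S}$ are shifted copies of the syndrome sequence, we obtain, for all $0\leq \ell\leq 2t-1-e$,
\[
S_{e+\ell}\;=\;-\bigl(\Lambda_1 S_{e+\ell-1}+\Lambda_2 S_{e+\ell-2}+\cdots+\Lambda_e S_\ell\bigr),
\]
which is to say that $\Lambda(x):=1+\Lambda_1 x+\cdots+\Lambda_e x^e$ is a connection polynomial of an LFSR generating $S_0,\ldots,S_{2t-1}$. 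Condition (1), the invertibility of the leading $e\times e$ sub-Hankel matrix, is the standard Hankel-rank criterion certifying that $e$ is the \emph{minimal} LFSR length of this sequence.

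With this reduction in place, I would invoke Massey's correctness invariant: at each iteration $r$, the BM algorithm maintains a pair $(L_r,\Lambda^{(r)}(x))$ such that $\Lambda^{(r)}$ is a connection polynomial of a minimum-length LFSR generating the prefix $S_0,\ldots,S_{r-1}$, and every length update obeys $L_r+L_{r-1}\geq r$. Because $e$ is globally minimal, there is some $r_0$ after which $L_r=e$ permanently; the Massey bound applied at the last length change forces $r_0\geq 2e$, while the fact that a valid LFSR of length $e$ generates the entire prefix $S_0,\ldots,S_{2e-1}$ gives $r_0\leq 2e$. Thus $r_0=2e$, and $\Lambda^{(2e)}$ coincides with the sought vector $(\Lambda_1,\ldots,\Lambda_e)$. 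This immediately handles the known-$e$ case: halting at iteration $2e$ returns the correct coefficients.

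For the unknown-$e$ case, I would apply the termination criterion of \cite{2007bm}: the algorithm tracks the index of the most recent length change and halts once the current iteration index exceeds twice the current length by a sufficient margin. After iteration $2e$, every discrepancy $\Delta_r=\sum_{i=0}^{e}\Lambda_i S_{r-i}$ vanishes by the recurrence derived above, so no further length change ever occurs. A counting argument then shows that the stopping rule first fires at iteration $r=t+e$, which corresponds to the worst case in which the final length change happens exactly at $r_0=2e$ and must be certified by $t-e$ additional iterations of zero discrepancy.

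The main obstacle is the delicate bookkeeping of the unknown-$e$ analysis, where one must confirm both that the stopping rule cannot fire prematurely (which would yield a connection polynomial of length $<e$, contradicting the minimality enforced by condition (1)) and that it is guaranteed to fire by iteration $t+e$ (not later). I would dispose of the first point by the minimality of $e$ together with Massey's invariant, and the second by reproducing the explicit counting in \cite[Theorem~1]{2007bm}, whose hypotheses---a Hankel sequence with a known upper bound $t$ on its minimal LFSR length---are exactly those in force here.
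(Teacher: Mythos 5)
The paper does not actually prove this statement: Theorem~\ref{ESBM} is imported verbatim from \cite{2007bm}, and the text surrounding it only verifies that the two hypotheses are met in the decoding context (via Lemma~\ref{inv-S} and Eq.~\eqref{eqsss2}) before deferring to the reference. So there is no in-paper proof to compare against; your proposal is a reconstruction of the cited result itself. On its own terms the reconstruction is sound in outline and follows the standard route: reading condition (2) coordinatewise along the Hankel structure does give the recurrence $S_{e+\ell}=-\sum_{i=1}^{e}\Lambda_i S_{e+\ell-i}$ for $0\leq\ell\leq 2t-1-e$, condition (1) correctly forces any generator of $S_0,\dots,S_{2e-1}$ to have length at least $e$ (so $e$ is the minimal LFSR length and the length-$e$ connection vector is the unique solution of Eq.~\eqref{onlyone}), and the known-$e$ conclusion then follows from Massey's minimality invariant at iteration $2e$.

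Two caveats. First, your assertion that ``the Massey bound applied at the last length change forces $r_0\geq 2e$'' is imprecise as stated: the register length $L_r$ can already equal $e$ well before step $2e$ (e.g.\ when the leading syndromes vanish), and what actually stabilizes at step $2e$ is the connection polynomial, not merely its length; the correct justification is the uniqueness argument you give in the same sentence, so this is a local misstatement rather than a gap. Second, the unknown-$e$ bound of $t+e$ iterations is the genuinely technical half of the theorem, and your treatment of it amounts to citing the counting argument of \cite{2007bm} rather than proving it; that is acceptable here precisely because the statement is itself a quoted theorem, but be aware that this part of your write-up is an appeal to the source, not an independent proof.
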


According to Lemma \ref{inv-S}, the matrix $\mathbf{S}_{e\times e}$ is invertible. According to Eq. \eqref{eqsss2}, we can verify that for any $1\leq j\leq t+1-e$, 
$$\sigma_e\mathbf{\xi}_j+\sigma_{e-1}\mathbf{\xi}_{j+1}+\cdots+\sigma_1\mathbf{\xi}_{e+j-1}+\mathbf{\xi}_{e+j}=\mathbf{0}_{t\times 1}.$$
Therefore, the two conditions of Theorem \ref{ESBM} are satisfied, and we can use the ESBM algorithm to solve Eq. \eqref{onlyone} and obtain $(\sigma_e,\sigma_{e-1},\cdots,\sigma_1)$.
Note that the number of errors $e$ is known according to the $t_0$-SI-FDMA algorithm, so the ESBM algorithm only needs $2e$ syndromes $\{S_i\}_{i=0}^{2e-1}$ and $2e$ iterations to output the $e$ unknown coefficients of $\sigma(x)$. 
Recall that ESBM algorithm \cite{2007bm} needs $t+e$ steps to solve the error locator polynomial, where $e$ is unknown. We propose the Shortened-ESBM (S-ESBM) algorithm which is given in Algorithm \ref{S-ESBM}, that can solve the error locator polynomial with $2e$ steps, where $e$ is known. In Algorithm \ref{S-ESBM}, for any $1\times n$ vector, the notation $\mathbf{v}|_{\ell}$ ($\ell\leq n$) represents the right $\ell$-truncate of the vector $\mathbf{v}$, i.e., $\mathbf{v}|_{\ell}=(v_{n-\ell+1},\cdots,v_{n-1},v_{n})$.
The multiplication complexity of S-ESBM is $2e^2-1$ \cite{2007bm}.

\begin{algorithm}[htb]
    \caption{The S-ESBM Algorithm}
    \label{S-ESBM}
    \begin{algorithmic}[1]
    \REQUIRE $e$, $\{S_i\}_{i=0}^{2e-1}$
    \ENSURE The coefficient vector of error locator polynomial $\mathbf{\Lambda}$
    \FOR{$r = 1$ to $2e$}
\STATE Calculate $d_r$ by $d_r=\mathbf{\Lambda}\cdot(S_{r-L},\dots,S_r)$
    
\IF{$d_r = 0$}
\STATE$\tilde{\mathbf{\Lambda}} \gets (\tilde{\mathbf{\Lambda}},0)$
    
\ELSIF{$d_r \neq 0$ and $r\leq 2L$}
    \STATE $\mathbf{\Lambda} \gets \mathbf{\Lambda}-d_r\cdot D^{-1}\cdot(\tilde{\mathbf{\Lambda}},0)|_{L+1}$
    \STATE$\tilde{\mathbf{\Lambda}} \gets (\tilde{\mathbf{\Lambda}},0)$
    
   \ELSIF{$d_r \neq 0$ and $r > 2L$}
   \STATE $\mathbf{\Lambda}_{temp} \gets \mathbf{\Lambda}$
   \STATE $\mathbf{\Lambda} \gets (\mathbf{0}_{1\times (r-2L)},\mathbf{\Lambda}) - d_r\cdot D^{-1}\cdot(\tilde{\mathbf{\Lambda}},0)|_{r-L+1}$
   \STATE $\tilde{\mathbf{\Lambda}} \gets (\mathbf{0}_{1\times(r-2L)},\mathbf{\Lambda}_{temp})$
   \STATE $D \gets d_r$
   \STATE $L \gets r - L$
   
\ENDIF
\ENDFOR
\RETURN $\mathbf{\Lambda}$
    \end{algorithmic}
\end{algorithm}

{\em Remark:} Note that the operation in line 17 of I-FDMA Algorithm \ref{alg.02} requires three field multiplications. Therefore, compared with the $t_0$-SI-FDMA algorithm, solving the error locator polynomial by I-FDMA algorithm requires at least 
$$2e\cdot3(t+1)=6e(t+1)$$
more multiplications.
When $e\ll t$, we have $2e^2-1\ll 6e(t+1)$, which means that the new algorithm for solving the error locator polynomials (i.e., using the $t_0$-SI-FDMA algorithm and S-ESBM algorithm) will further effectively reduce the multiplication complexity for the case of $e\ll t$, which is also a major reason for the multiplication complexity reduction in our second decoding algorithm.

\subsection{Transformation of Coefficient Vector}

Note that the coefficient vector $(1,\sigma_1,\sigma_2,\cdots,\sigma_e)$ of $\sigma(x)$ is based on the standard basis $\{1,x,x^2,\cdots,x^{2^m-1}\}$ in $\mathbb{F}_{2^m}/(x^{2m}-x)$. In order to employ the LCH-FFT algorithm in Section \ref{sec:4.1} in our second decoding algorithm, we need to convert the coefficient vector into the form based on the LCH-basis $\bar{X}$.
According to the definition of $\bar{X}$, there exists $(e+1)\times (e+1)$ invertible matrix $\mathbf{T}_{e+1}$ such that
\begin{align}\sigma(x)&=(1,\sigma_1,\sigma_2,\cdots,\sigma_e)\cdot\begin{bmatrix}
 1\\
 x\\
 \vdots\\
 x^e
\end{bmatrix}\nonumber\\
    &=(1,\sigma_1,\sigma_2,\cdots,\sigma_e)\cdot\mathbf{T}_{e+1}\cdot\begin{bmatrix}
 \bar{X}_0(x)\\
 \bar{X}_1(x)\\
 \vdots\\
 \bar{X}_{e}(x)
\end{bmatrix}.\label{convert}
\end{align}
Note that $\mathbf{T}_{e+1}$ is a lower triangular matrix and can be pre-calculated, and the element in the first row and the first column of $\mathbf{T}_{e+1}$ is 1 since $\bar{X}_0(x)=1$, so the number of multiplications of the coefficient vector transformation of $\sigma(x)$ is less than or equal to $\frac{e^2+3e}{2}$.
To distinguish, in the following, for any polynomial $f(x)$, let $\bm{f}_{\bar{X}}$ be the coefficient vector of $f(x)$ under the basis $\bar{X}$ and $\bm{f}$ be the coefficient vector under the standard basis.

\subsection{Our Second Decoding Algorithm}
Let $s:=\left \lfloor \log(e) \right \rfloor +1$ and $R:=2^s$, then $e<R\leq 2e<t_0+1$.

We present decoding process for \textbf{Step (3)} of our second decoding algorithm as follows.
Once $\bm{\sigma}_{\bar{X}}$ is calculated by S-ESBM Algorithm \ref{S-ESBM} and Eq.~\eqref{convert}, we can find the roots $\{\beta_j^{-1}=\omega_{i_j}^{-1}\}_{j=1}^{e}$ of $\sigma(x)$ through the Chien search and FFT Algorithm \ref{FFT} as follows,
\begin{align}\label{chien}
    FFT_{\bar{X}}(\bm{\sigma}_{\bar{X}},s,\omega_{\ell\cdot 2^s}),\ \forall \ell=0,1,\cdots,2^{m-s}-1,
\end{align}
and the complexity is $O(n\lg(R))$.
Then we can calculate the error locator polynomial $\lambda(x)=\prod_{j=1}^{e}(x-\omega_{i_j})$ (here, we still call $\lambda(x)$ as the error locator polynomial as in Section \ref{sec:4}), in which the number of multiplications is $\frac{e^2-e}{2}$.
Then we can get $\bm{\lambda}_{\bar{X}}$ according to Eq. \eqref{convert}.
Next, we can obtain $\bm{z}_{\bar{X}}$ by following three steps:
first, we compute $\{\lambda(\omega_{i})\}_{i=0}^{R-1}$ by $FFT_{\bar{X}}(\bm{\lambda}_{\bar{X}},s,\omega_{0})$; second, we compute $\{z(\omega_{i})\}_{i=0}^{R-1}$ by Eq. \eqref{computez}; finally, we get $\bm{z}_{\bar{X}}$ by $IFFT_{\bar{X}}(\mathbf{D}_{R},s,\omega_0)$, where $\mathbf{D}_{R}:=(z(\omega_{0}),z(\omega_{1}),\cdots,z(\omega_{R-1}))$, and the 
complexity of these three steps is $O(n\lg(R))$.
Finally, we can obtain the error values $\{e_{i_j}\}_{j=1}^{e}$ by Forney's formula by Eq. \eqref{forney}, and the complexity is $O(n\lg(R))$.

We summarize our second decoding algorithm in Algorithm \ref{RSdecoding2}. 
Note that we can use the Reed-Muller (RM) transform \cite{2018RM,2020RM,2023RM} to calculate the $2e$ syndromes in line 3 in Algorithm \ref{RSdecoding2} to further reduce the number of multiplications. When $e=1$, the number of multiplications is $3\left \lfloor \log(n) \right \rfloor -1$ \cite{2018RM}, and when $e\geq 2$, there is no exact expression for the number of multiplications of RM transform. The asymptotic multiplications of RM transform for $e\geq 2$ is given in \cite{2023RM}, please refer to \cite{2023RM} for more details.

\begin{algorithm}[htb]
	\caption{Our Second Decoding Algorithm}
	\label{RSdecoding2}
	\begin{algorithmic}[1]
		\REQUIRE Received vector $\mathbf{r}=\mathbf{F}+\mathbf{e}$
		\ENSURE The codeword $\mathbf{F}$
		\STATE Compute the syndrome polynomial $s(x)$ and syndromes $\{s(\omega_i)\}_{i=0}^{2^\mu-1}$
		\STATE Compute the number of errors $e$ by $t_0$-SI-FDMA Algorithm \ref{alg.03}
        \STATE Compute $2e$ syndromes $\{S_i\}_{i=0}^{2e-1}$
		\STATE Compute the coefficient vector based on the standard basis of $\sigma(x)$ by S-ESBM Algorithm \ref{S-ESBM}, and obtain $\sigma_{\bar{X}}$
		\STATE Find all error locations by Chien search
		\STATE Compute $\{\lambda(\omega_{i})\}_{i=0}^{R-1}$ by $FFT_{\bar{X}}(\bm{\lambda}_{\bar{X}},s,\omega_{0})$, then we compute $\{z(\omega_{i})\}_{i=0}^{R-1}$ by Eq. \eqref{computez}, and get $\bm{z}_{\bar{X}}$ by IFFT Algorithm \ref{IFFT}
		\STATE Compute the formal derivative of $\lambda(x)$, and then get the error pattern $\mathbf{e}$ by Forney's formula 
		\RETURN The codeword $\mathbf{F}=\mathbf{r}+\mathbf{e}$
	\end{algorithmic}
\end{algorithm}

\section{Comparative Analysis}
\label{sec:com}
\subsection{Comparison of MA-based Algorithms}
\label{sec:5}
In the following, we focus on evaluating the decoding complexity for our I-FDMA algorithm and eFDMA algorithm \cite{2023eFDMA}. The difference between our I-FDMA algorithm and eFDMA algorithm is that our I-FDMA algorithm only needs $2e$ steps to find the $e$ error positions, while eFDMA algorithm requires $t+e$ steps. Therefore, the decoding complexity of our I-FDMA algorithm is strictly less than that of eFDMA algorithm when $e<t$.

First, we show the number of multiplications and the number of additions involved in finding the $e$ error positions by FDMA algorithm \cite{2022MA}, eFDMA algorithm \cite{2023eFDMA}, and I-FDMA algorithm in Table \ref{tab0} for $(n,k)$ RS code with error correction capability $t$, where $e\leq t$.

\begin{table}[htpb]
    \centering
    \caption{The number of multiplications and additions involved in finding the $e$ error positions by FDMA algorithm \cite{2022MA}, eFDMA algorithm \cite{2023eFDMA} and I-FDMA algorithm for $(n,k)$ RS code with error correction capability $t$, where $e\leq t$.} 
   
\renewcommand\arraystretch{2.0}
    \resizebox{0.40\textwidth}{!}{\begin{tabular}{|c|c|c|}
\hline  
	\makecell{MA-based\\ algorithms} & \makecell{Multiplication \\complexity} & \makecell{Addition\\ complexity}  \\
 \hline  
 FDMA\cite{2022MA}&$12t^2+3t$&$8t^2+2t$\\
 \hline
 eFDMA\cite{2023eFDMA}&\makecell{$\frac{1}{2}(7t^2+26et+$
 \\$3(e+t)-9r^2)$}&\makecell{$\frac{1}{2}(5t^2+16et-$\\$5e^2+3e+t)$}\\
 \hline
Our I-FDMA&$18et-6e^2+3e$&$12et-4e^2+2e$\\
 \hline
\end{tabular}}  
\label{tab0}
\end{table}

According to Table \ref{tab0}, when $e\ll t$, the complexity of I-FDMA algorithm is only $O(t)$ level, while the complexity of FDMA and eFDMA algorithms is both $O(t^2)$ level.

Because the multiplication in finite fields is more time-consuming than the addition, we show the number of multiplications involved in finding the $e$ error positions by our I-FDMA algorithm and eFDMA algorithm for $(n=256,k=224)$ RS code in Table \ref{tab2}.
The results in Table \ref{tab2} demonstrate that our I-FDMA algorithm can reduce the number of multiplications by 9.94\%-74.67\%, compared with eFDMA algorithm.

\begin{table}[htpb]
    \centering
    \caption{The number of multiplications involved in finding the error positions by I-FDMA algorithm and eFDMA algorithm for  $(256,224)$ RS code. In the table, the improvement is defined as the ratio of the multiplication reduction to the number of multiplications involved in finding the error positions by eFDMA algorithm.} 
\renewcommand\arraystretch{1.5}
    \resizebox{0.40\textwidth}{!}{
\begin{tabular}{|c|c|c|c|}
	\hline  
	$e$ (t=16)&eFDMA&I-FDMA&Improvement\\
	\hline  
	1 & 1125 & 285 & 74.67\% \\
	\hline  
	2 & 1321 & 558 & 57.76\% \\
	\hline  
	3 & 1508 & 819 & 45.69\% \\
	\hline  
	4 & 1686 & 1068 & 36.65\% \\
	\hline  
	5 & 1855 & 1305 & 29.65\% \\
	\hline  
	6 & 2015 & 1530 & 24.07\% \\
	\hline  
	7 & 2166 & 1743 & 19.53\% \\
	\hline  
	8 & 2308 & 1944 & 15.77\% \\
	\hline  
	9 & 2441 & 2133 & 12.62\% \\
	\hline  
	10 & 2565 & 2310 & 9.94\% \\
	\hline  
\end{tabular}}  
\label{tab2}
\end{table}

Moreover, we show the total number of multiplications involved in our first decoding algorithm in Algorithm \ref{RSdecoding1} of $(256,224)$ RS code and eFDMA algorithm in Tabel \ref{tab3}. The results show that our first decoding algorithm can reduce the number of multiplications by 5.64\%-41.79\%, compared with that based on eFDMA algorithm.

\begin{table}[htpb]
    \centering
    \caption{The number of multiplications involved in our first decoding algorithm in Algorithm \ref{RSdecoding1} and eFDMA algorithm for  $(256,224)$ RS code. In the table, the improvement is defined as the ratio of the multiplication reduction to the number of multiplications involved in the decoding algorithm based on eFDMA algorithm.} 
\renewcommand\arraystretch{1.5}
    \resizebox{0.40\textwidth}{!}{
\begin{tabular}{|c|c|c|c|}
	\hline  
	$e$ (t=16)&eFDMA&Our Algorithm \ref{RSdecoding1}&Improvement\\
	\hline  
	1 & 2010 & 1170 & 41.79\% \\
	\hline  
	2 & 2352 & 1589 & 32.44\% \\
	\hline  
	3 & 2549 & 1860 & 27.03\% \\
	\hline  
	4 & 2935 & 2317 & 21.05\% \\
	\hline  
	5 & 3130 & 2580 & 17.57\% \\
	\hline  
	6 & 3316 & 2831 & 18.12\% \\
	\hline  
	7 & 3493 & 3070 & 14.62\% \\
	\hline  
	8 & 4125 & 3761 & 12.10\% \\
	\hline  
	9 & 4324 & 4016 & 8.82\% \\
	\hline  
	10 & 4514 & 4259 & 5.64\% \\
	\hline  
\end{tabular}}  
\label{tab3}
\end{table}
\subsection{Algorithm \ref{RSdecoding2} VS Algorithm \ref{RSdecoding1}}
Next, we compare the multiplication complexity for the proposed two decoding algorithms, namely our first decoding algorithm in Algorithm \ref{RSdecoding1} (based on I-FDMA algorithm) and our second decoding algorithm in Algorithm \ref{RSdecoding2} (based on $t_0$-SI-FDMA algorithm and S-ESBM algorithm).
We take $t_0=t$ in the following comparisons.

In Table \ref{tab30}, we give the number of multiplications of decoding Algorithm \ref{RSdecoding1} and decoding Algorithm \ref{RSdecoding2} for $(n=256,k=224)$ RS code error correction procedure, and we take $e\in\{1,2,\cdots,8\}$.
In Table \ref{tab40}, we give the number of multiplications of decoding Algorithm \ref{RSdecoding1} and decoding Algorithm \ref{RSdecoding2} for $(n=128,k=96)$ RS code error correction procedure, and we take $e\in\{1,2,\cdots,8\}$.

\begin{table}[htpb]
    \centering
    \caption{The number of multiplications of decoding Algorithm \ref{RSdecoding1} and decoding Algorithm \ref{RSdecoding2} for $(n=256,k=224)$ RS code error correction decoding process. In the table, the improvement is defined as the ratio of the multiplication reduction to the number of multiplications by Algorithm \ref{RSdecoding1}.}
    \renewcommand\arraystretch{1.5}
    \resizebox{0.40\textwidth}{!}
   {\begin{tabular}{|c|c|c|c|}
	\hline  
	$e$ (t=16)&\makecell{Decoding\\ Algorithm \ref{RSdecoding1}}&\makecell{Decoding\\ Algorithm \ref{RSdecoding2}}&Improvement\\
	\hline  
	1 & 1170 & 769 & 34.27\% \\
	\hline  
	2 & 1589 & 1057 & 33.48\% \\
	\hline  
	3 & 1860 & 1193 & 35.86\% \\
	\hline  
	4 & 2317 & 1591 & 31.33\% \\
	\hline  
	5 & 2580 & 1707 & 33.83\% \\
	\hline  
	6 & 2831 & 1928 & 31.89\% \\
	\hline  
	7 & 3070 & 2117 & 31.04\% \\
	\hline  
	8 & 3761 & 2931 & 22.06\% \\
	\hline  
\end{tabular}}
\label{tab30}
\end{table}

\begin{table}[htpb]
    \centering
    \caption{The number of multiplications of decoding Algorithm \ref{RSdecoding1} and decoding Algorithm \ref{RSdecoding2} for $(n=128,k=96)$ RS code error correction decoding process. In the table, the improvement is defined as the ratio of the multiplication reduction to the number of multiplications by Algorithm \ref{RSdecoding1}.}
    \renewcommand\arraystretch{1.5}
    \resizebox{0.40\textwidth}{!}
   {\begin{tabular}{|c|c|c|c|}
	\hline  
	$e$ (t=16)&\makecell{Decoding\\ Algorithm \ref{RSdecoding1}}&\makecell{Decoding\\ Algorithm \ref{RSdecoding2}}&Improvement\\
	\hline  
	1 & 786 & 449 & 42.87\% \\
	\hline  
	2 & 1141 & 673 & 41.01\% \\
	\hline  
	3 & 1412 & 809 & 42.70\% \\
	\hline  
	4 & 1805 & 1143 & 36.67\% \\
	\hline  
	5 & 2068 & 1259 & 39.11\% \\
	\hline  
	6 & 2319 & 1480 & 36.17\% \\
	\hline  
	7 & 2558 & 1669 & 34.75\% \\
	\hline  
    8 & 3185 & 2419 & 24.05\% \\
	\hline
\end{tabular}}
\label{tab40}
\end{table}

According to Table \ref{tab30} and Table \ref{tab40}, when $e\leq t/2$, the results show that Algorithm \ref{RSdecoding2} can reduce the number of multiplications by 22.06\%-35.86\% for parameters $n=256,k=224$ and by 24.05\%-42.87\% for parameters $n=128,k=96$, compared with Algorithm \ref{RSdecoding1}.

To sum up, our second decoding algorithm in Algorithm \ref{RSdecoding2} is more suitable for the case of $2e<t_0+1$.
We summarize the essential reasons as follows: (1) I-FDMA algorithm needs to multiply matrix and vector to iterate the value of the error locator polynomial, which will cause much more multiplication operations when $e\ll t$, while the $t_0$-SI-FDMA Algorithm \ref{tsi} and the S-ESBM Algorithm \ref{S-ESBM} have lower multiplication complexity when $e\ll t$; (2) when $e\ll t$, the RM transform can effectively reduce the multiplication complexity for syndrome calculation (i.e., line 3 of Algorithm \ref{RSdecoding2}), while with the increase of $e$, the number of multiplications required by RM transform will also increase rapidly \cite{2023RM}.

\section{Conclusion}
\label{sec:6}
In this paper, we propose an efficient termination mechanism of the MA method to find the error locator polynomials by only $2e$ steps in the error correction decoding process of RS codes, where $e$ is the number of errors.
Based on the new termination mechanism, we propose two types of error correction decoding algorithms for LCH-FFT-based RS codes, which have lower multiplication complexity than the existing correlation decoding algorithms.

\ifCLASSOPTIONcaptionsoff
  \newpage
\fi

\bibliographystyle{IEEEtran}
\bibliography{CNC-v1}

\end{document}